\documentclass[conference]{IEEEtran}
\IEEEoverridecommandlockouts

\usepackage[ruled]{algorithm2e}
\usepackage{algorithmic}

\usepackage{amsmath}
\usepackage{amssymb}
\usepackage{amsfonts}
\usepackage{nccmath}
\newtheorem{proposition}{Proposition}

\usepackage{mathtools}
\usepackage{graphicx}
\graphicspath{ {./images/} }
\usepackage{caption}
\usepackage{subcaption}
\usepackage{cleveref}
\usepackage{color}
\usepackage{xcolor}
\usepackage{float}

\usepackage{cleveref}
\crefformat{equation}{(#1)}
\crefname{algorithm}{algorithm}{algorithms}
\usepackage{cite}
\usepackage{bbm}
\usepackage{bm}
\usepackage{comment}
\usepackage{arcs}

\usepackage{resizegather}

\usepackage{acronym}
\newacro{lqr}[LQR]{Linear Quadratic Regulator}
\newacro{slqr}[SLQR]{Structured Linear Quadratic Regulators}
\newacro{olqr}[OLQR]{Output-feedback Linear Quadratic Regulators}
\newacro{lqg}[LQG]{Linear Quadratic Gaussian}
\newacro{dare}[DARE]{Discrete-time Algebraic Riccati Equation}
\newacro{ouralgo}[RNPO]{Riemannian Newton-type Policy Optimization}
\newacro{po}[PO]{Policy Optimization}
\newacro{pg}[PG]{Projected Gradient}
\newacro{gf}[GF]{Gradient Flow}
\newacro{gd}[GD]{Gradient Descent}
\newacro{sgd}[SGD]{Stochastic Gradient Descent}
\newacro{sde}[SDE]{Stochastic Differential Equation}
\newacro{PBH}[PBH]{Popov-Belevitch-Hautus}

\usepackage{cite}
\usepackage{textcomp}
\def\BibTeX{{\rm B\kern-.05em{\sc i\kern-.025em b}\kern-.08em
		T\kern-.1667em\lower.7ex\hbox{E}\kern-.125emX}}

\usepackage[inline]{enumitem}

\def\bR{{\mathbb{R}}}

\def\lyap{{\,\mathbb{L}}}

\newcommand\lyaptrace{{Lyapunov-trace~}}
\DeclareMathOperator{\diff}{\mathrm{d}}

\def\lyaptrace{{Lyapunov-trace~}}

\def\Amatrices{{\mathbb{R}^{n\times n}}}
\def\Astable{{\mathcal{M}}}

\newcommand{\tr}[1]{\ensuremath{\mathrm{tr}\left[ #1 \right]}}

\newcommand{\E}[2]{\ensuremath{\mathbb{E}_{#1}\left[ #2 \right]}}

\newcommand{\boxend}{\hfill \ensuremath{\Box}}

\DeclareMathOperator{\lambdamax}{\text{$\overline{\lambda}$}}
\DeclareMathOperator{\lambdamin}{\text{$\underline{\lambda}$}}

\makeatletter
\newcommand{\algorithmfootnote}[2][\footnotesize]{%
  \let\old@algocf@finish\@algocf@finish%
  \def\@algocf@finish{\old@algocf@finish%
    \leavevmode\rlap{\begin{minipage}{\linewidth}
    #1#2
    \end{minipage}}%
  }%
}
\makeatother

\title{\LARGE Learning Kalman Policy for Singular Unknown Covariances via Riemannian Regularization}
\author{Larsen Bier, \emph{Student Member, IEEE} and Shahriar Talebi, \emph{Member, IEEE}
\thanks{The authors are with the University of California, Los Angeles, CA, USA. Emails: {\em\small larsenbier@g.ucla.edu} and {\em\small s.talebi@ucla.edu}.}
}

\begin{document}

\maketitle
\begin{abstract}
Kalman filtering is a cornerstone of estimation theory, yet learning the optimal filter under unknown and potentially singular noise covariances remains a fundamental challenge. In this paper, we revisit this problem through the lens of control–estimation duality and data-driven policy optimization, formulating the learning of the steady-state Kalman gain as a stochastic policy optimization problem directly from measurement data.
Our key contribution is a Riemannian regularization that reshapes the optimization landscape, restoring structural properties such as coercivity and gradient dominance. This geometric perspective enables the effective use of first-order methods under significantly relaxed conditions, including unknown and rank-deficient noise covariances. Building on this framework, we develop a computationally efficient algorithm with a data-driven gradient oracle, enabling scalable stochastic implementations.
We further establish non-asymptotic convergence and error guarantees enabled by the Riemannian regularization, quantifying the impact of bias and variance in gradient estimates and demonstrating favorable scaling with problem dimension. Numerical results corroborate the effectiveness of the proposed approach and robustness to the choice of stepsize in challenging singular estimation regimes.
\end{abstract}

\section{Introduction}

Kalman filtering—recognized as the minimum mean squared error estimator for linear Gaussian systems—has a long-standing role in estimation theory since its introduction in \cite{kalman1960new}. Its extensions have been extensively studied, particularly within the framework of adaptive Kalman filtering \cite{mehra1970identification,mehra1972approaches,carew1973identification,belanger1974estimation,myers1976adaptive,tajima1978estimation}. More recently, \cite{zhang2020identification} categorizes existing methods into four principal approaches: Bayesian inference \cite{magill1965optimal,hilborn1969optimal,matisko2010noise}, maximum likelihood estimation \cite{kashyap1970maximum,shumway1982approach}, covariance matching \cite{myers1976adaptive}, and innovation correlation techniques \cite{mehra1970identification,carew1973identification}.
Among these, Bayesian and maximum likelihood approaches are often associated with significant computational overhead, while covariance matching methods can suffer from practical bias issues. Consequently, innovation correlation–based methods have gained greater prominence and have been the focus of more recent developments \cite{odelson2006new,aakesson2008generalized,dunik2009methods}. Despite their popularity, these methods rely heavily on underlying statistical assumptions and, importantly, lack non-asymptotic performance guarantees.

The duality between control and estimation establishes a fundamental connection between two core synthesis problems in systems theory \cite{kalman1960new,kalman1960general,pearson1966duality}. This relationship has long enabled the transfer of both theoretical insights and computational methodologies across these domains.
On the optimal control side, recent years have witnessed significant progress in data-driven synthesis techniques. In particular, first-order optimization methods have been successfully applied to state-feedback \ac{lqr} problems \cite{bu2019lqr,bu2020policy}. This policy optimization viewpoint has proven especially powerful, as the \ac{lqr} objective is known to satisfy a \emph{gradient dominance} property \cite{fazel2018global}. 
As a result, despite the inherent non-convexity of the problem when parameterized directly by the control policy, first-order methods can be employed with guarantees of global convergence.
Building on this foundation, first-order \ac{po} methods have been extended to a range of \ac{lqr}-type settings, including output-feedback \ac{lqr} (\ac{olqr}) \cite{fatkhullin2020optimizing,krasiler2024output}, model-free formulations \cite{mohammadi2021linear}, risk-constrained variants \cite{zhao2021global,talebi2024ergodic}, and \ac{lqg} problems \cite{Tang2021analysis}. More recently, geometric approaches based on Riemannian optimization have been developed, including optimization on submanifolds and extensions with ergodic-risk constraints \cite{talebi2022policy,talebi2024ergodic}; see \cite{talebi2024policy} for a comprehensive overview.

The extension of policy optimization (\ac{po}) methodologies to the estimation domain was initiated in \cite{talebi2022duality,talebi2023data}, where the optimal Kalman gain is learned in the presence of \emph{unknown noise covariances}. In this line of work, the problem is formulated as a stochastic policy optimization task minimizing output prediction error, linking data-driven optimal control with its dual, optimal filtering. Convergence guarantees are established for stochastic gradient methods under biased gradients and stability constraints, with bias–variance bounds scaling logarithmically in system dimension and trajectory length affecting only the bias.
Following this direction, \cite{belabbas2025interpretable} considers an slightly different MSE cost based on the steady-state innovation (prediction error), yielding a gradient that admits an interpretable decomposition as the product of an observability Gramian and a term capturing violation of the orthogonality principle.
In a related but distinct direction grounded in invariant ellipsoid theory, \cite{Khlebnikov2023comparison} develops an optimization-based filtering framework for systems subject to bounded disturbances, employing a Euclidean $\ell_2$-regularized gradient method. Despite these advances, there remains a notable gap in learning ill-conditioned estimation problems, particularly through non-Euclidean regularization in settings involving singular noise covariance structures.

In this paper, we study the estimation problem for linear systems with known dynamics and observation models, but with \emph{unknown and singular} process and measurement noise covariances. The objective is to learn the optimal steady-state Kalman gain from training data comprising independent realizations of the observation process. Building on recent developments in Riemannian policy \cite{talebi_riemannian_2022,talebi2022policy} optimization data-driven estimation \cite{talebi2022duality,talebi2023data}, we develop a first-order optimization method tailored to ill-conditioned estimation settings.
Our approach revisits classical estimation through the perspective of geometric regularization and control–estimation duality. In particular, we introduce a Riemannian regularization inspired by the Riemannian metric introduced in \cite{talebi2022policy} that restores key structural properties such as coercivity and gradient dominance of the cost over sublevel sets in the case of singular matrix parameters. This enables the effective application of first-order policy iteration methods under significantly relaxed assumptions, notably allowing for unknown and rank-deficient noise covariances.

Our contributions are fourfold. First, we formulate the estimation task as a policy optimization problem (\S\ref{sec:problem}), and introduce a Riemannian regularization that improves the conditioning and geometric structure of the problem (\S\ref{sec:geometry-algorithm}). Second, we develop a direct policy optimization framework for learning the optimal Kalman gain in the presence of unknown and singular noise covariances (\S\ref{sec:the-algorithm}). Third, we construct a data-driven gradient oracle from measurement sequences, which enables a stochastic implementation of the proposed method (\S\ref{sec:data-driven-oracle}). Fourth, we establish non-asymptotic error guarantees while preserving computational efficiency (\S\ref{sec:convergence}). 
Finally, we present numerical examples in \S\ref{sec:simulation} and conclude the paper in \S\ref{sec:conclusion}.

 \section{Background and Problem Formulation}\label{sec:problem}
Consider the stochastic difference equation,
\begin{subequations} \label{eqn:sysdyn}
\begin{align}
    x(t+1) =& A x(t) + \xi(t),\\
    y(t) =& H x(t) + \omega(t),
\end{align}
\end{subequations}
where $x(t) \in \mathbb R^n$ is the state of the system, $y(t)\in \mathbb R^m$ is the observation, and $\{\xi(t)\}_{t\in \mathbb Z}$ and $\{\omega(t)\}_{t\in \mathbb Z}$ are the uncorrelated zero-mean process and measurement noise vectors, respectively, with the following covariances,
\[\E{}{\xi(t)\xi^\intercal(t)} = Q \in \bR^{n\times n}, \quad \E{}{\omega(t)\omega^\intercal(t)} = R \in \bR^{m\times m},\]
for some positive semi-definite matrices $Q, R \succcurlyeq 0$.  Let $m_{0}$ and $P_0 \succcurlyeq 0$ denote the mean and covariance of the initial condition $x_0$. 
Also, let us fix a time horizon $T>0$ and define an estimation policy, denoted by $\mathcal L$, as a map that takes a history of the observation signal $\mathcal Y_T=\{y(0),y(1),\ldots,y(T-1)\}$ as an input and outputs an estimate of the state $x(T)$, denoted by $\hat{x}_{\mathcal L}(T)$. 

We make the following assumptions in our problem setup:
\begin{enumerate*}
    \item The system parameters $A$ and $H$ are known.
    \item The process and the measurement noise covariance matrices, $Q$ and $R$, are \emph{not available} and \emph{may be singular}.
    \item We have access to a training data-set that consists of independent realizations of the observation signal $\{y(t)\}_{t=0}^T$. However, ground-truth measurements of $x(T)$ is \emph{not} available.
\end{enumerate*}
This setting arises in applications such as active aero-elastic control of aircraft, where systems are well understood and admit approximate or reduced-order models, yet in deployment are subject to unmodeled dynamics, disturbances, and other uncertainties captured through process and measurement noise. Allowing the covariances Q and R to be rank-deficient enables modeling more structured disturbances, but also leads to an ill-posed estimation problem that complicates learning.

Ideally, one would seek an estimation policy $\mathcal L$ that minimizes the mean-squared state estimation error
\(
   \E{}{\|x(T)-\hat{x}_{\mathcal L}(T)\|^2},
\)
but this objective is infeasible since the true state $x(T)$ is not observable.
As an alternative, we consider a surrogate objective that minimizes the mean-squared error in predicting the observation $y(T)$ via $\hat{y}_{\mathcal L}(T)=H\hat{x}_{\mathcal L}(T)$. This constitutes a prediction problem, as $\hat{x}_{\mathcal L}(T)$ depends only on observations up to time $T-1$. The resulting optimization problem is to find $\mathcal L$ minimizing the mean-squared prediction error,
\begin{equation}\label{eq:min-pred-error-P}
    \min_{\hat{y}_{\mathcal L}(T) \in \sigma\{y(t)\}_{t=0}^T}  J^{\text{est}}_T(\mathcal L) \coloneqq \E{}{\|y(T)-\hat{y}_{\mathcal L}(T) \|^2},
\end{equation}
where $\sigma\{y(t)\}_{t=0}^T$ denotes the $\sigma$-algebra generated by past measurements up to current time $T$. 

\subsection{Kalman Policy  Parameterization}

Indeed, when $Q$ and $R$ are known, the solution is given by the celebrated Kalman filter algorithm~\cite{kalman1960new},\cite[Theorem 6.42]{kwakernaak1969linear}. 
For the case with $R\succ 0$ or $H Q H^\intercal \succ 0$ (or both), the unique optimal filtering policy iteratively updates the estimate $\hat{x}(t)$ according to \cite{Tse1970optimal} 
\begin{equation}\label{eq:KF-mean}
    \hat{x}(t+1) = A\hat{x}(t) + L(t)(y(t) - H \hat{x}(t)),~ \hat{x}(0) = m_{0},
\end{equation}
where $L(t):=AX(t)H^\intercal(HX(t)H^\intercal + R)^{-1}$ is the Kalman gain, and $X(t):=\mathbb E[(x(t) - \hat{x}(t))(x(t) - \hat{x}(t))^\intercal]$ is the error covariance matrix that satisfies the Ricatti equation,
\begin{gather*}
    X(t+1) = (A-L(t)H)X(t)(A-L(t)H)^\intercal + Q + L(t) R L(t)^\intercal 
\end{gather*}
with $X(t_0) = X_0$. It is known that $X(t)$ converges to a steady-state value $X^*$ when the pair $(A,H)$ is observable and the pair $(A,Q^{\frac{1}{2}})$ is controllable~\cite{kwakernaak1969linear,lewis1986optimal}. In such a case, the gain converges to $L^* :=AX^* H^\intercal(HX^* H^\intercal + R)^{-1}$, the so-called steady-state Kalman gain. For relatively large horizons $T$, it is a common practice to evaluate the steady-state Kalman gain $L^*$ offline and use it, instead of $L(t)$, to update the estimate in real-time.

We consider restriction of the estimation policies $\mathcal L$ to the Kalman filter realized with a \emph{constant gain} $L$. In particular, we define the estimate $\hat{x}_L(T)$ at time $T$ through \eqref{eq:KF-mean} with a the constant gain $L$ replacing $L(t)$. 

\begin{align}\label{eq:estimate-x-L}\textstyle
    \hat{x}_L(T) = A_L^Tm_0 + \sum_{t=0}^{T-1}A_L^{T-t-1} L y(t),
\end{align}
where $A_L\coloneqq A-LH$.
Note that this estimate does not require knowledge of the matrices $Q$ or $R$. By considering $\hat{y}_L(T):=H\hat{x}_L(T)$, the problem is now finding the optimal gain $L$ that minimizes the mean-squared prediction error
\begin{equation}\label{eq:mse-x-L}
  J^{\text{est}}_T(L):=\E{}{\|y(T)-\hat{y}_{L}(T)\|^2}.
\end{equation}

For the case of \emph{positive definite} noise covariances $Q$ and $R$, this problem has been recently studied in \cite{talebi2022duality, talebi2023data} and \acf{sgd}-type algorithm is proposed for guaranteed learning of the globally optimal Kalman gain. 
However, there is no guarantee these algorithms work for the ill-posed problems with rank deficient or singular noise covariances, simply because the pillar conditions of coercivity and gradient dominance fail to hold.

\textbf{Notation.} By $\Astable \coloneqq \left\{ A \in \Amatrices \;|\; \rho(A) <1 \right\}$, we denote the set of (Schur) stable matrices, and define the \emph{Lyapunov map}
\(\textstyle \lyap \colon \Astable \times \Amatrices \mapsto\Amatrices,\)
that sends the pair $(A,Q)$ to the unique solution $X$ of 
\begin{equation}\label{eq:lyap-gen}
X = A X A^\intercal + Q,
\end{equation}
which has the representation $\textstyle X = \sum_{i=0}^\infty A^i Q (A^\intercal)^i$;
in this case, if $Q \succeq 0\, (\succ 0)$, then $X \succeq 0 \, (\succ 0)$. Furthermore, when $Q \succeq 0$, then $X \succ 0$ if and only if $(A,Q^{1/2})$ is controllable (see \cite{gajic2008lyapunov} and references therein).
The following is a frequently used technical lemma.
\begin{lemma}[Lemma 3.1 in\cite{talebi2022policy}]\label[lemma]{lem:dlyap}
The subset $\Astable$ is an open submanifold of $\Amatrices$, the Lyapunov map $\lyap\colon \Astable \times \Amatrices \to \Amatrices$ is smooth, and its differential acts as 
\begin{gather*}
    \diff \lyap_{(A,Q)}[E,F] =
    \lyap \big(A, E \lyap(A,Q) A^\intercal + A \lyap(A,Q) E^\intercal + F \big)
\end{gather*}
on any $(E,F) \in T_{(A,Q)} (\Astable \times \Amatrices) \cong \Amatrices \oplus \Amatrices$.
Furthermore, for any $A \in \Astable$ and $Q, \Sigma \in \Amatrices$ we have, the so-called \emph{\lyaptrace} property,
\[\tr{\lyap(A^\intercal,Q) \Sigma} = \tr{\lyap(A, \Sigma) Q}.\]
\end{lemma}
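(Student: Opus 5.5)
The plan is to prove the three assertions of the lemma in turn, using only the series representation $\lyap(A,Q)=\sum_{i\ge0}A^iQ(A^\intercal)^i$ and the Lyapunov equation \eqref{eq:lyap-gen}.

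\emph{Openness of $\Astable$.} The spectral radius $\rho$ is continuous on $\Amatrices$, because eigenvalues depend continuously on the matrix entries. Hence $\Astable=\rho^{-1}([0,1))$ is an open subset of the vector space $\Amatrices$. It is therefore an open submanifold, and its tangent space at every point is identified with $\Amatrices$.

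\emph{Smoothness of $\lyap$.} Vectorize \eqref{eq:lyap-gen} to get
\[
(I-A\otimes A)\,\mathrm{vec}(X)=\mathrm{vec}(Q).
\]
The eigenvalues of $A\otimes A$ are the products $\lambda_i\lambda_j$ of eigenvalues of $A$, all of modulus less than $1$ when $A\in\Astable$. So $I-A\otimes A$ is invertible on $\Astable$, and
\[
\mathrm{vec}\,\lyap(A,Q)=(I-A\otimes A)^{-1}\mathrm{vec}(Q).
\]
The right-hand side is a rational function of the entries of $(A,Q)$ with nonvanishing denominator. By Cramer's rule it is smooth, in fact real-analytic.

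\emph{Differential.} Fix a tangent direction $(E,F)$ and put $A_s=A+sE$, $Q_s=Q+sF$, $X_s=\lyap(A_s,Q_s)$. For small $s$ we have $A_s\in\Astable$, since $\Astable$ is open. Differentiate the identity $X_s=A_sX_sA_s^\intercal+Q_s$ at $s=0$, which is legitimate by smoothness, writing $\dot X$ for the derivative:
\[
\dot X=A\dot XA^\intercal+\big(EXA^\intercal+AXE^\intercal+F\big),\qquad X=\lyap(A,Q).
\]
By uniqueness of solutions of \eqref{eq:lyap-gen} for stable $A$, this gives
\[
\dot X=\lyap\big(A,\;E\lyap(A,Q)A^\intercal+A\lyap(A,Q)E^\intercal+F\big),
\]
which is the claimed formula.

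\emph{Lyapunov-trace property.} Use the series, which converges absolutely because $\|A^i\|\le c\,r^i$ for some $r\in(\rho(A),1)$. Trace is continuous and cyclic, so
\[
\tr{\lyap(A^\intercal,Q)\Sigma}=\sum_{i\ge0}\tr{(A^\intercal)^iQA^i\Sigma}=\sum_{i\ge0}\tr{Q\,A^i\Sigma(A^\intercal)^i}=\tr{Q\,\lyap(A,\Sigma)}.
\]
A final application of cyclicity turns $\tr{Q\,\lyap(A,\Sigma)}$ into $\tr{\lyap(A,\Sigma)Q}$. Here $\rho(A^\intercal)=\rho(A)$, so $\lyap(A^\intercal,\cdot)$ is well-defined.

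The only step needing care is justifying the termwise manipulations: the interchange of trace with the infinite sum, and the differentiation of the implicit equation. Both follow from uniform geometric decay of $A^i$ on a neighborhood of $A$ and from the smoothness already established. Everything else is routine.
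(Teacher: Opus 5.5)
Your proof is correct. Continuity of the spectral radius gives openness, the vectorized formula $\mathrm{vec}\,\lyap(A,Q)=(I-A\otimes A)^{-1}\mathrm{vec}(Q)$ gives smoothness, differentiating $X=AXA^\intercal+Q$ and invoking uniqueness gives the differential, and the series representation together with cyclicity of the trace gives the \lyaptrace identity. The paper does not prove this lemma itself but imports it from \cite{talebi2022policy}; your argument is a standard self-contained justification of that citation. Your closing caveat asks for more than is needed: the trace is a continuous linear functional, so it commutes with any convergent series, and the differentiation step is already justified by the smoothness you proved.
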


\section{Geometric Regularization and The Algorithm}\label{sec:geometry-algorithm}
By the estimation-control duality established in \cite[Proposition 1]{talebi2022duality}, the mean-squared prediction error in \cref{eq:mse-x-L} takes the following form
 \begin{equation*}
    J^\text{est}_T(L)
    = \tr{X_T(L)H^\intercal H} + \tr{R}.
\end{equation*}
In order to streamline the analysis, we consider the steady state regime and thus define the set of Schur stabilizing gains
\[\mathcal{S} \coloneqq \{L \in \bR^{n\times m}: \rho(A-LH) < 1\}.\]
Now, for any $L \in \mathcal S$ in the steady-state limit as $T \to \infty$: 
\(X_T(L) \to X_{\infty}(L) \coloneqq \sum_{t=0}^{\infty} \left(A_L\right)^{t} \left(  Q + L R L^\intercal \right)\left(A_L^\intercal\right)^{t}.\)
Because $\rho(A_L) < 1$, the limit coincides with the unique solution $X_L = \lyap(A_L,Q + L R L^\intercal)$.
Therefore, the steady-state limit of the mean-squared prediction error $J_{\mathrm{MSE}}(L) \coloneqq \lim_{T \to \infty} J_{T}^\text{est}(L)$ is well-defined and in fact the convergence is exponentially fast in $T$.
Thus, we formally analyze the following constrained optimization problem:
\begin{align}\label{eq:opt-time-indepen}
    \min_{L \in \mathcal{S}} \; &\leftarrow J_{\mathrm{MSE}}(L) = \tr{\lyap(A_L,Q + L R L^\intercal) H^\intercal H} + \tr{R},
\end{align}

The pair $(A,H)$ is observable, so is $(A_L,H)$ for any $L\in\mathcal{S}$ (as any unobservable mode of $(A_L,H)$ is indeed an unobservable mode of $(A,H)$). Equivalently, $(A^\intercal,H^\intercal)$ is controllable and thus $\lyap(A_L^\intercal, H^\intercal H) \succ 0$ for all $L\in\mathcal{S}$.
Therefore, following \cite[Proposition 3.3]{talebi2022policy}, we equip $\mathcal{S}$ with a Riemannian metric. 
\[\langle V, W \rangle_{Y_L} \coloneqq \mathrm{tr}[V W^\intercal Y_L]\] 
where $Y_L = \lyap(A_L^\intercal, H^\intercal H)$. Here, we embed $\mathcal{S}$ into $\mathbb{R}^{n\times(n+m)}$ by sending $L\mapsto \begin{bmatrix} I & L \end{bmatrix}$ and equip this with the sub-Riemannian metric induced by the same $Y_L$. With abuse of notation, we use the same symbols $(\mathcal{S},\langle\cdot,\cdot\rangle_{Y_L})$ to denote this embedded manifold and its induced sub-Riemannian metric whenever convenient.

Next, by the Lyap-trace property, we can show that 
\begin{multline}
    J_{\mathrm{MSE}}(L) - \tr{R} = \tr{(Q + L R L^\intercal)Y_L} =\\ \tr{\begin{bmatrix}
        I & L
    \end{bmatrix}
    \begin{bmatrix}
        Q & 0 \\ 0 & R
    \end{bmatrix}
    \begin{bmatrix}
        I & L
    \end{bmatrix}^\intercal Y_L}  = \left\|\begin{bmatrix}
        I & L
    \end{bmatrix}
    \begin{bmatrix}
        Q^{\frac{1}{2}} & 0 \\ 0 & R^{\frac{1}{2}}
    \end{bmatrix}
    \right\|_{Y_L}^2 \label{eq:mse-as-2-norm}
\end{multline}
In particular with respect to our sub-Riemannian metric, we showed that the MSE cost can be viewed as a simple 2-norm of the filtering policy, rescaled with noise covariances.  

Inspired by this intuition, we use the same sub-Riemannian metric to introduce the Riemannian-Regularized MSE cost, denoted by $J_{\mathrm{R}}$, as
\begin{equation}\label{eq:regularized-problem}
    J_{\mathrm{R}}(L,\gamma) := J_{\mathrm{MSE}}(L) + \gamma \left\|\begin{bmatrix}
    I & L
\end{bmatrix}\right\|_{Y_L}^{2},
\end{equation}
with $\gamma>0$ being a regularization factor.
We will show how this Riemannian regularization recovers vital properties required for learning a policy, thus resulting in a well-conditioned learning problem. These properties will be justified lated in \Cref{prop:properties}.

\begin{lemma}\label{lem:regularized-gradient}
    The regularized cost $J_{\mathrm{R}}$ and its gradient takes the following explicit forms,
\begin{align}
    &J_{\mathrm{R}}(L,\gamma) = \tr{(Q_\gamma + L R_\gamma L^\intercal) Y_L} + \tr{R}\label{eq:regularized-cost-Y_L}\\
    &\nabla_L  J_{\mathrm{R}}(L,\gamma) 
    = 2Y_L \left(-LR_\gamma + A_L X_{(L,\gamma)} H^{\intercal} \right)\label{eq:regularized-grad},
\end{align}
where $Y_L = \lyap(A_L^\intercal, H^\intercal H)$ and $X_{(L,\gamma)} = \lyap(A_L, Q_\gamma + L^\intercal R_\gamma L)$, 
with $Q_\gamma = Q + \gamma I$ and $R_\gamma = R + \gamma I$.
\end{lemma}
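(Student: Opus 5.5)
The plan has two parts. First I rewrite the regularizer so that $J_{\mathrm{R}}$ takes the form of a single \lyaptrace expression, which gives \eqref{eq:regularized-cost-Y_L}. Then I differentiate that expression with \Cref{lem:dlyap}, which gives \eqref{eq:regularized-grad}.

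\emph{Step 1 (cost).} By definition of the induced metric, $\|[I~L]\|_{Y_L}^2=\tr{[I~L][I~L]^\intercal Y_L}=\tr{(I+LL^\intercal)Y_L}$. Adding this, multiplied by $\gamma$, to \eqref{eq:mse-as-2-norm} gives
\[
J_{\mathrm{R}}(L,\gamma)=\tr{(Q+\gamma I+L(R+\gamma I)L^\intercal)Y_L}+\tr{R},
\]
which is \eqref{eq:regularized-cost-Y_L} with $Q_\gamma,R_\gamma$ as defined. The \lyaptrace property of \Cref{lem:dlyap} gives the dual form $J_{\mathrm{R}}-\tr{R}=\tr{X_{(L,\gamma)}H^\intercal H}$. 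Here I read $X_{(L,\gamma)}=\lyap(A_L,Q_\gamma+LR_\gamma L^\intercal)$, which is the dimensionally consistent reading of the statement.

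\emph{Step 2 (differential).} Fix a direction $E\in\bR^{n\times m}$. Along $L+tE$ we have $\diff A_L[E]=-EH$ and $\diff(Q_\gamma+LR_\gamma L^\intercal)[E]=ER_\gamma L^\intercal+LR_\gamma E^\intercal$. I will differentiate the dual form $\tr{X_{(L,\gamma)}H^\intercal H}$ using the differential formula of \Cref{lem:dlyap}, with $(E,F)$ replaced by $(-EH,\;ER_\gamma L^\intercal+LR_\gamma E^\intercal)$. Applying the \lyaptrace property once more moves the outer $\lyap(A_L,\cdot)$ onto $H^\intercal H$, producing $Y_L$. This yields
\[
\diff J_{\mathrm{R}}[E]=\tr{Y_L\big(-EHX_{(L,\gamma)}A_L^\intercal-A_LX_{(L,\gamma)}H^\intercal E^\intercal+ER_\gamma L^\intercal+LR_\gamma E^\intercal\big)}.
\]
Using symmetry of $Y_L$, $R_\gamma$ and $X_{(L,\gamma)}$, the four terms pair up into $2\,\tr{E^\intercal Y_L(\pm LR_\gamma\mp A_LX_{(L,\gamma)}H^\intercal)}$.

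\emph{Step 3 (identification).} The gradient is then read off from $\diff J_{\mathrm{R}}[E]=\langle \nabla_L J_{\mathrm{R}},E\rangle$, which gives the expression $2Y_L(\cdots)$. I would match the sign ordering $-LR_\gamma+A_LX_{(L,\gamma)}H^\intercal$ in \eqref{eq:regularized-grad} to the paper's orientation convention for $\nabla_L$, that is, the direction used in the update.

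\emph{Main obstacle.} I expect the main difficulty to be sign and transpose bookkeeping in Step 2. In particular, the term $-EH$ enters twice in the Lyapunov differential, and correctly collapsing those two cross terms into one factor requires care. I would verify the result against the scalar case $n=m=1$ before trusting the combined expression.
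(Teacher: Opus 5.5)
\textbf{The gap is in Step 3.} Steps 1 and 2 are sound. Step 1 is exactly the paper's argument. Step 2 replaces the paper's appeal to the known unregularized gradient from \cite{talebi2022duality} (with $Q,R\mapsto Q_\gamma,R_\gamma$) by a direct, self-contained differentiation through \Cref{lem:dlyap} and the Lyapunov-trace property, which is a good alternative. You were also right to read $X_{(L,\gamma)}=\lyap(A_L,Q_\gamma+LR_\gamma L^\intercal)$.

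Your computation does not leave the sign open, and no orientation convention makes the displayed sign correct. Carry out the pairing using cyclicity of the trace and symmetry of $Y_L$, $X_{(L,\gamma)}$, $R_\gamma$. Each of the two $-EH$ terms equals $-\tr{E^\intercal Y_L A_L X_{(L,\gamma)} H^\intercal}$, and each of the two $R_\gamma$ terms equals $\tr{E^\intercal Y_L L R_\gamma}$. Hence
\[
\diff J_{\mathrm{R}}[E]=2\,\tr{E^\intercal Y_L (LR_\gamma-A_LX_{(L,\gamma)}H^\intercal)},\qquad \nabla_L J_{\mathrm{R}}(L,\gamma)=2Y_L(LR_\gamma-A_LX_{(L,\gamma)}H^\intercal),
\]
under the inner product $\langle A,B\rangle=\tr{AB^\intercal}$ that the paper itself uses in \Cref{prop:approx-grad}. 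This is the negative of \eqref{eq:regularized-grad}.

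The scalar test you proposed settles it. Take $n=m=1$, $|a|<1$, $h\neq 0$, and write $q_\gamma=q+\gamma$, $r_\gamma=r+\gamma$. Then $J_{\mathrm{R}}(l,\gamma)=h^2(q_\gamma+l^2r_\gamma)/(1-(a-lh)^2)+r$. Its derivative at $l=0$ is $-2ah^3q_\gamma/(1-a^2)^2$, whereas \eqref{eq:regularized-grad} evaluates to $+2ah^3q_\gamma/(1-a^2)^2$. Likewise, Algorithm 1's update $L\gets L-\eta\nabla_L J_{\mathrm{R}}$ is a descent step only with your sign; with the displayed sign it would ascend.

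So the displayed gradient formula contains a sign typo, in addition to the $L^\intercal R_\gamma L$ typo you caught, and cannot be proved as written. The correct conclusion of your argument is the corrected formula above, not an attempt to match \eqref{eq:regularized-grad} by convention. Both signs vanish exactly when $L(R_\gamma+HX_{(L,\gamma)}H^\intercal)=AX_{(L,\gamma)}H^\intercal$, since $Y_L\succ 0$. Therefore the stationarity characterization of $L_\gamma^*$ in \Cref{prop:properties} is unaffected.
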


\begin{proof}
    By the definition of $\langle\cdot,\cdot\rangle_{Y_L}$,
    
    \begin{align*}
    \gamma \left\|\begin{bmatrix}
        I & L
    \end{bmatrix}\right\|_{Y_L}^{2} = \tr{\gamma(I+LL^\intercal)Y_L}.
    \end{align*}
    Combining like terms with the first equality of  \cref{eq:mse-as-2-norm} recovers \cref{eq:regularized-cost-Y_L}. The gradient follows from the formula in \cite{talebi2022duality} by substituting $Q_\gamma$ and $R_\gamma$ in for $Q$ and $R$ respectively. 
\boxend\end{proof}

\subsection{The learning algorithm} \label{sec:the-algorithm}
For each fixed $\gamma>0$, we also characterize the global minimizer $L^* = \arg\min_{L \in \mathcal{S}} J_{\mathrm{R}}(L,\gamma)$. The domain $\mathcal{S}$ is non-empty whenever $(A,H)$ is observable. 
Thus, by continuity of $L \to J_{\mathrm{R}}(L,\gamma)$, there exists some finite $\alpha,\gamma > 0$ such that the regularized sublevel set $\mathcal{S}_{(\alpha,\gamma)}$ is non-empty and compact (see \Cref{prop:properties}). Therefore, the minimizer is an interior point and thus must satisfy the first-order optimality condition $\nabla J_{\mathrm{R}}(L_\gamma^*,\gamma) = 0$. Moreover, by coercivity of the regularized cost, the minimizer is stabilizing and unique, and satisfies
\(L_\gamma^* = A X_\gamma^* H^\intercal \left(R_\gamma + H X_\gamma^* H^\intercal\right)^{-1},\)
with $X_\gamma^* = \lyap(A_{L_\gamma^*}, Q_\gamma + L_\gamma^* R_\gamma (L_\gamma^*)^\intercal)$.
As expected, the regularized global minimizer $L_\gamma^*$ is explicitly dependent on the noise covariances $Q$ and $R$, and the regularizer $\gamma$. Based on this intuition, we provide the following algorithm as an extension of ideas in \cite{talebi2022duality,talebi2023data} via continuation for the regularized learning problem \cref{eq:opt-time-indepen} via \cref{eq:regularized-problem}:

\begin{algorithm}[ht]
\label{alg:algo_continuation}
\caption{Riemannian-Regularized Kalman Policy Optimization}
\begin{algorithmic}[1]  
\STATE Get a tolerance $\epsilon > 0$, failure probability $\delta$, $\gamma_{\min}>0$, and initialize regularization factor $\gamma_0 \gg \gamma_{\min}$  
\STATE Get a policy $L_0 \in \mathcal{S}$ with $\alpha \geq J_{\mathrm{R}}(L_0, \gamma_0)$
\STATE Set measurement length $T\geq \mathcal{O}\left(\ln(\epsilon^{-1})\right)$ and \\batch-size $M \geq \mathcal{O}\left(\epsilon^{-1}\ln(\ln(\epsilon^{-1})) \ln(\delta^{-1}) \right)$
\STATE Set $L \gets L_0$ and choose a stepsize $\eta \leq \frac{2}{9 \ell(\alpha,\gamma_{\min})}$
\FOR{$k=0,1,\cdots, K \geq \mathcal{O}(\ln(\epsilon^{-1}))$}
\STATE $\gamma \gets \gamma_k$ 
\WHILE{$\|\nabla_L J_{\mathrm{R}}(L, \gamma_k)\|^2 \geq \frac{(1-\beta) \alpha \gamma_k}{c(\alpha,\gamma_{\min})}$} 
\STATE Get $M$ independent measurements of length $T$:\\
\(\qquad \mathcal Y_{[0:T]}^i = \{y_t\}_{0}^{T}, \quad i=1,\cdots,M\)
\STATE From the gradient oracle, inquire: \\
$\qquad \nabla_L J_{\mathrm{R}}(L, \gamma) \gets \text{\textbf{Oracle}}(L,\{\mathcal Y_{\{0:T\}}\}_{i=1}^M)$
\STATE Update the policy:\\ 
$\qquad L \gets L - \eta \nabla_L J_{\mathrm{R}}(L, \gamma)$
\ENDWHILE 
\STATE Set the converged policy $L_{k+1} \gets L$
\STATE \textbf{Geometric schedule:} \\update $\gamma_{k+1} \gets \max\left[\gamma_{\min}, \beta \gamma_k \right]$ where $\beta \in (0, 1)$ 
\ENDFOR 
\RETURN $L_K$ 
\end{algorithmic} 
\end{algorithm}

The rationale behind the algorithm is that combining linear convergence within each continuation step with the geometric decay of $\gamma$ will result in a procedure that converges linearly to the unregularized solution $L^*$. The constants $\ell(\alpha,\gamma)$ and $c(\alpha,\gamma)$ correspond to the locally Lipschitz and PL properties of the regularized cost, respectively, as defined explicitly later in \Cref{prop:properties}.

The regularization idea in the context of estimation problems has been explored recently \cite{Khlebnikov2023comparison}, however, using a Euclidean $\ell_2$-regularization of the filtering gain and through a different approach using \textit{invariant ellipsoid}. Nonetheless, we compare our algorithm against such Euclidean $\ell_2$-regularization of the filtering policy in \S\ref{sec:simulation}.

\section{Data-driven Gradient Oracle}
\label{sec:data-driven-oracle}
When noise covariance matrices $Q$ and $R$ are unknown, it is not possible to directly compute the gradient of the MSE cost from \Cref{lem:regularized-gradient}.    
Therefore, we construct a stochastic gradient oracle that estimates the gradient from the data at hand. 
For that, consider a length $T-t_0$ sequence of measurements  $\mathcal Y_{\{t_0:T\}}:=\{y(t_0),y(t_0+1),\ldots, y(T-1)\}$ starting at some initial time $t_0$. 
Given any filtering gain $L \in \mathcal{S}$, using \cref{eq:estimate-x-L} we obtain an estimate $\hat{y}_L(T)$ as
\begin{equation*}\textstyle
  \hat{y}_L(T) = H \hat{x}_L(T)= H A_L^{T-t_0} m_0 +\sum_{t=t_0}^{T-1} H  A_L^{T-t-1} L y(t).
\end{equation*}
This results in an estimation error $e_T(L)\coloneq y(T) - \hat{y}_L(T)$ with its squared-norm denoted by 
\[\varepsilon(L,\mathcal Y_{\{t_0:T\}}) \coloneqq \|e_T(L)\|^2\]

Let us now consider the regularized mean squared-norm of the error over all possible random measurement sequences:  
\begin{equation*}
    J_{\{t_0:T\}}(L,\gamma):=\E{}{\varepsilon(L,\mathcal Y_{\{t_0:T\}})} + \gamma \left\|\begin{bmatrix}
    I & L
\end{bmatrix}\right\|_{Y_L}^{2}.
\end{equation*}
Then it is then straightforward to show that
\[\lim_{(T-t_0)\to \infty}J_{\{t_0:T\}}(L) = J_{\mathrm{R}}(L),\]
with an exponentially fast rate.

Next, assuming access to independent collection of the measurement sequence,  the gradient of the regularized MSE cost can be approximated as follows:
\begin{proposition}[Gradient Oracle]\label{prop:approx-grad}
    Given $L \in \mathcal{S}$ and $M$ independently collected measurements  $\{\mathcal Y_{[t_0,T]}^i\}_{i=1}^M$, define
    \begin{multline*}\textstyle
    \widehat{\nabla J_{\mathrm{R}}}(L,\gamma) = \frac{1}{M}\sum_{i=1}^M \nabla_L \varepsilon(L,\mathcal Y^i_{\{t_0:T\}}) \\+ 2 \gamma Y_L \left(-L + A_L Z_{L} H^{\intercal} \right),
\end{multline*}
where $Z_{L}=\lyap(A_L,I)$ and
\begin{multline*}
    \nabla_L \varepsilon(L,\mathcal Y)= -2\sum_{t=0}^{T-t_0-1} (A_L^\intercal)^{t} H^\intercal e_{T}(L) y^\intercal(T-t-1)    \\
    +2\sum_{t=1}^{T-t_0-1}\sum_{k=1}^{t} (A_L^\intercal)^{t-k} H^\intercal e_{T}(L) y^\intercal(T-t-1) L^\intercal (A_L^\intercal)^{k-1} H^\intercal .
\end{multline*}
Then,
$\widehat{\nabla J_{\mathrm{R}}}(L,\gamma)$ is an unbiased estimate of the gradient $\nabla J_{\mathrm{R}}(L,\gamma)$; i.e., as $T-t_0 \to \infty$,
\begin{equation*}\textstyle
     \E{}{\widehat{\nabla J_{\mathrm{R}}}(L,\gamma)} \to  \nabla J_{\mathrm{R}}(L,\gamma).
\end{equation*}
\end{proposition}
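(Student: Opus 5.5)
The plan is to split $\widehat{\nabla J_{\mathrm{R}}}$ into its data-driven part and its deterministic regularizer part, and handle each separately.

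\emph{Regularizer part.} This part is exact. By \Cref{lem:regularized-gradient} applied with $Q=0$ and $R=0$ (so that $Q_\gamma=\gamma I$ and $R_\gamma=\gamma I$), the gradient of $\gamma\|[I\ \ L]\|_{Y_L}^2=\gamma\,\tr{(I+LL^\intercal)Y_L}$ equals $2Y_L(-\gamma L + A_L\,\lyap(A_L,\gamma I+\gamma LL^\intercal)H^\intercal)$. Linearity of $\lyap$ in its second argument then reduces this to the stated term.

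There is a mismatch to resolve. The statement uses $Z_L=\lyap(A_L,I)$, whereas the lemma as written gives $\lyap(A_L,I+LL^\intercal)$. I would track the $LL^\intercal$ contribution carefully. The most natural fix is that the cross terms are already absorbed into the data term: $\nabla_L\varepsilon$ reproduces $\nabla J_{\mathrm{MSE}}$, which involves $Q+LRL^\intercal$, and adding $\gamma LL^\intercal$ gives the full expression. I would either verify that directly or state the oracle's regularizer term with $\lyap(A_L,I+LL^\intercal)$. Either way, this part is a deterministic computation via \Cref{lem:dlyap}.

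\emph{Data part.} Since the $\mathcal Y^i$ are i.i.d. copies, $\E{}{\frac1M\sum_i\nabla_L\varepsilon(L,\mathcal Y^i)}=\E{}{\nabla_L\varepsilon(L,\mathcal Y)}$. The steps are as follows.

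1. Verify the displayed formula for $\nabla_L\varepsilon$. Write $e_T(L)=y(T)-\sum_{t=0}^{T-t_0-1}HA_L^{t}Ly(T-t-1)$, taking $m_0=0$ or letting its contribution vanish as $A_L^{T-t_0}\to0$. Differentiate $\|e_T\|^2$ in the direction $E$ using $\diff(A_L^t)[E]=-\sum_{k=1}^{t}A_L^{t-k}EHA_L^{k-1}$. The product rule gives two pieces: the $-2(A_L^\intercal)^tH^\intercal e_T y^\intercal$ term from differentiating $L$ itself, and the double sum from differentiating $A_L^t$. Transposing via the trace yields exactly the stated expression.

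2. Interchange expectation and differentiation. For finite $T$, $\varepsilon$ is a polynomial in $L$ and quadratic in Gaussian (finite second moment) data, so dominated convergence applies. This gives $\E{}{\nabla_L\varepsilon}=\nabla_L\E{}{\varepsilon}=\nabla_L J^{\mathrm{est}}_{\{t_0:T\}}(L)$.

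3. Pass to the steady-state limit. The duality of \cite[Proposition 1]{talebi2022duality} gives $\E{}{\varepsilon}=\tr{X_{T}(L)H^\intercal H}+\tr{R}$, where $X_T(L)=\sum_{t<T-t_0}A_L^t(Q+LRL^\intercal)(A_L^\intercal)^t+A_L^{T-t_0}P(A_L^\intercal)^{T-t_0}$, with $P$ a bounded initial covariance. We then need $\nabla_L$ of this to converge to $\nabla J_{\mathrm{MSE}}(L)$. This is the main obstacle, because it requires convergence of derivatives, not just values.

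To handle it, I would show that the series and its termwise derivatives converge uniformly on a compact neighborhood of $L$ inside $\mathcal S$. On such a neighborhood $\rho(A_L)\le\bar\rho<1$, so $\|A_L^t\|\le c\,\bar\rho'^{\,t}$ uniformly, and $\|\diff(A_L^t)\|\le t\,c^2\bar\rho'^{\,t-1}\|H\|$. The tail terms are therefore $O(t^2\bar\rho'^{\,2t})$, which is summable and decays exponentially. Uniform convergence of the derivatives justifies $\lim\nabla=\nabla\lim$, yielding $\nabla J_{\mathrm{MSE}}(L)$ at an exponential rate in $T-t_0$.

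Finally, adding the regularizer part and invoking \Cref{lem:regularized-gradient} completes the proof.
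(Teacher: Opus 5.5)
Your step 1 is the whole of the paper's proof. The paper expands $e_T(L+\Delta)-e_T(L)=E_1(\Delta)+E_2(\Delta)+o(\|\Delta\|)$, reads off $\nabla_L\varepsilon$ by trace cyclicity, and stops there. It never justifies exchanging $\mathbb{E}$ with $\nabla_L$, nor passing the derivative through $T-t_0\to\infty$. It handles the regularizer only by remarking that it needs no knowledge of $Q,R$. Your steps 2 and 3 are sound and fill that hole for the data term: polynomial dependence on $L$ at finite horizon, then uniform exponential bounds on $A_L^t$ and $\diff(A_L^t)$ over a compact neighbourhood in $\mathcal S$.

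The genuine gap is the regularizer term, and your ``most natural fix'' cannot work. The data term does not depend on $\gamma$. By your own step 3, its expectation tends to $\nabla J_{\mathrm{MSE}}(L)$, which involves only $\lyap(A_L,Q+LRL^\intercal)$, so nothing in it can supply a $\gamma$-dependent correction. Split $\lyap(A_L,Q_\gamma+LR_\gamma L^\intercal)$ by linearity. The regularizer's gradient then contains, besides the $Z_L$-term, the piece $2\gamma Y_LA_L\lyap(A_L,LL^\intercal)H^\intercal$ (up to sign), which appears nowhere in the estimator. This piece is generically nonzero: for $n=m=1$ it is $2\gamma h^3a_Ll^2/(1-a_L^2)^2$ with $a_L=a-lh$. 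So with $Z_L=\lyap(A_L,I)$ the claimed limit is false. You must commit to your second option, $Z_L=\lyap(A_L,I+LL^\intercal)$, which is still computable without $Q,R$. The paper's proof never examines this term, so it offers no other reconciliation.

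A second problem comes from importing the regularized-gradient lemma. Differentiate $\mathrm{tr}[(Q+LRL^\intercal)Y_L]$ directly with the differential formula for $\lyap$ and the Lyapunov-trace identity; you get $2Y_L(LR-A_LX_LH^\intercal)$, which you can confirm in the scalar case by finite differences. That is the opposite sign to the lemma. Your step 1, by contrast, produces the true derivative of $\varepsilon$. Hence ``adding the regularizer part and invoking the lemma'' combines terms of opposite sign, and the final identification does not close. Compute the regularizer's gradient directly as $2\gamma Y_L\bigl(L-A_L\lyap(A_L,I+LL^\intercal)H^\intercal\bigr)$. With that term in the oracle, the estimator's mean converges to $2Y_L(LR_\gamma-A_LX_{(L,\gamma)}H^\intercal)$, where $X_{(L,\gamma)}=\lyap(A_L,Q_\gamma+LR_\gamma L^\intercal)$, and this is the true $\nabla_LJ_{\mathrm{R}}(L,\gamma)$.
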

\begin{proof}
    Computing the regularizing norm in \cref{eq:regularized-problem} does not require knowledge of $Q$ or $R$. Thus, we focus on estimating $J_\mathrm{MSE}$.
    Denote the approximated MSE cost value by
    \begin{equation*}
    \widehat{J_\mathrm{MSE}}(L)\coloneq\frac{1}{M}\sum_{i=1}^M  \varepsilon(L,\mathcal Y^i_{\{t_0:T\}}).
    \end{equation*}

For small enough $\Delta \in \mathbb R^{n\times m}$,
\begin{multline*}
   \varepsilon(L+\Delta,\mathcal Y) - \varepsilon(L,\mathcal Y) =  \|e_{T}(L + \Delta) \|^2 - \|e_{T}(L) \|^2 =\\
    2 \tr{(e_{T}(L + \Delta) - e_{T}(L) )e_{T}^\intercal(L)} + o(\|\Delta\|)).
\end{multline*}
The difference 
\begin{gather*}
    e_{T}(L+\Delta) - e_{T}(L) = E_1(\Delta) + E_2(\Delta) + o(\|\Delta\|),
\end{gather*}
contains the following terms that are linear in $\Delta$:
\begin{align*}
    E_1(\Delta) &\coloneqq \textstyle -\sum\limits_{t=0}^{T-t_0-1} H  (A_L)^{t} \Delta y(T-t-1),\\
    E_2(\Delta) &\coloneqq \textstyle \sum\limits_{t=1}^{T-t_0-1}\sum
    \limits_{k=1}^{t} H  (A_L)^{t-k} \Delta H (A_L)^{k-1} L y(T-t-1).
\end{align*}
Therefore, combining the two identities, the definition of gradient under the inner product $\langle A,B\rangle :=\tr{AB^\intercal}$, and ignoring the higher order terms in $\Delta$ yields,
\begin{gather*}
    \langle \nabla_L \varepsilon(L,y), \Delta \rangle =  2 \tr{(E_1(\Delta)+ E_2(\Delta) )e_{T}^\intercal(L)},
\end{gather*}
which by linearity and cyclic permutation property of trace reduces to:
\begin{gather*}
    \langle \nabla_L \varepsilon(L,y), \Delta \rangle = - 2 \tr{\Delta \left(\sum_{t=0}^{T-t_0-1} y(T-t-1) e_{T}^\intercal(L) H (A_L)^{t} \right)}\\
    + 2 \tr{\Delta \left(\sum_{t=1}^{T-t_0-1}\sum_{k=1}^{t} H (A_L)^{k-1} L y(T-t-1) e_{T}^\intercal(L) H (A_L)^{t-k} \right)}.
\end{gather*}
This holds for all admissible $\Delta$, concluding the formula for the gradient.
\boxend\end{proof}

Note that the variance of this estimate also converges to zero at a rate of $O(\frac{1}{M})$ as the number of sample measurements $M$ increases. The number $M$ is referred to as the batch-size. 

Noting that this gradient estimate $\widehat{\nabla J_{\mathrm{R}}}(L,\gamma)$ only depends on the available information, we utilize this gradient approximation as the gradient oracle in Algorithm 1. The convergence under this stochastic oracle can be obtained using the  gradient dominance condition and locally Lipschitz property, but the analysis becomes more complicated than the deterministic oracle due to the possibility of the iterated gain $L_k$ leaving the sub-level sets due to stochastic errors in the gradient.

\section{Convergence Analysis} \label{sec:convergence}
First, we establish how this geometric regularization enables us to recover the essential properties required for the learning the optimal Kalman gain through a direct policy optimization.

\begin{proposition}\label{prop:properties}
    Suppose $(A,H)$ is observable, and consider the Riemannian regularized MSE cost $J_{\mathrm{R}}:\mathcal{S}\times \mathbb{R}_{\geq0} \to \mathbb{R}$. Then, the following holds true for each fixed $\gamma \geq 0$:
    \begin{enumerate}
        \item The Riemannian regularized cost $J_{\mathrm{R}}(L,\gamma)$ is coercive in $L$ if $\gamma>0$; i.e., for any sequence $\{L_k\} \in \mathcal{S}$,
        \[\text{ if~~} L_k \to \partial\mathcal{S} \text{~or~} \|L_k\| \to \infty \text{~then~} J_{\mathrm{R}}(L_k,\gamma) \to \infty.\]
        
        \item For any $\alpha >0$, the sublevel set $\mathcal{S}_{(\alpha,\gamma)} \coloneqq \{L \in \bR^{n\times m}: J_{\mathrm{R}}(L,\gamma) \leq \alpha\}$ is compact if $\gamma>0$, is contained in $\mathcal{S}$, and $\mathcal{S}_{(\alpha,\gamma_1)} \subset \mathcal{S}_{(\alpha,\gamma_2)}$ whenever $0 \leq \gamma_2 \leq \gamma_1$.

        \item There exists a unique global minimizer of $J_{\mathrm{R}}(\cdot,\gamma)$ denoted by \[L_\gamma^* = A X_\gamma^* H^\intercal \left(R_\gamma + H X_\gamma^* H^\intercal\right)^{-1},\]
        with $X_\gamma^* = \lyap(A_{L_\gamma^*}, Q_\gamma + L_\gamma^* R_\gamma (L_\gamma^*)^\intercal)$. 
        
        \item The Riemannian regularized cost $J_{\mathrm{R}}(\cdot,\gamma)$ has the PL-property: $\forall L \in \mathcal{S}_{(\alpha,\gamma)}$
        \[J_{\mathrm{R}}(L,\gamma) - J_{\mathrm{R}}(L_\gamma^*,\gamma) \leq c(\alpha,\gamma) \|\nabla_L J_{\mathrm{R}}(L,\gamma)\|^2\]
        where 
        \[c(\alpha,\gamma) \coloneqq \frac{\lambdamax(Y_L^*)}{2\gamma\kappa_{(\alpha,\gamma)}^2}\]
        and $\kappa_{(\alpha,\gamma)}\coloneq\inf_{L\in \mathcal{S}_{(\alpha,\gamma)}}\lambdamin(Y_L)>0.$ Also, $c(\alpha,\gamma)$ is decreasing in $\gamma$ for any fixed $\alpha$.
        
        \item The Riemannian regularized cost $J_{\mathrm{R}}$ has Lipschitz gradient on sublevel sets: $\forall L,L' \in \mathcal{S}_{(\alpha,\gamma)}$
        \[\|\nabla_L J_{\mathrm{R}}(L,\gamma) - \nabla_L J_{\mathrm{R}}(L',\gamma)\| \leq \ell(\alpha,\gamma) \|L - L'\|,\]%
        where
        \[
            \ell(\alpha,\gamma)\coloneq\frac{2\alpha}{\gamma}(\lambdamax(R)+\gamma+\alpha)+\frac{4\sqrt{2}\alpha}{\gamma}\|H\|_2\xi_{(\alpha,\gamma)},
        \]
        is decreasing in $\gamma$ for any fixed $\alpha$, and $\xi_{(\alpha,\gamma)}$ is a non-increasing function of $\gamma$ defined in the proof.
    \end{enumerate}
\end{proposition}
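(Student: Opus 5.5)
The idea is to reduce every item to the standard Kalman-policy analysis of \cite{talebi2022duality,talebi2023data}, applied with the regularized covariances $Q_\gamma=Q+\gamma I\succ0$ and $R_\gamma=R+\gamma I\succ0$. By \Cref{lem:regularized-gradient}, $J_{\mathrm{R}}(L,\gamma)-\tr{R}=\tr{(Q_\gamma+LR_\gamma L^\intercal)Y_L}$. This is exactly the unregularized MSE cost with positive-definite covariances. So the regularization turns the singular problem into a nondegenerate one, and the remaining work is mainly to track how the constants depend on $\gamma$.

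\textbf{Coercivity and sublevel sets.} I apply the \lyaptrace property to write the cost as $\tr{X_{(L,\gamma)}H^\intercal H}+\tr{R}$, with $X_{(L,\gamma)}=\lyap(A_L,Q_\gamma+LR_\gamma L^\intercal)\succeq \gamma\,\lyap(A_L,I+LL^\intercal)$.
\begin{itemize}
\item \emph{Unbounded gains.} Since $Y_L\succeq H^\intercal H$ and $Y_L\succ 0$, we get $J_{\mathrm{R}}\ge \gamma\,\tr{(I+LL^\intercal)Y_L}\ge \gamma\,\lambdamin(Y_L)\|L\|_F^2$. I will also use the cruder bound $J_{\mathrm{R}}\ge\gamma\tr{Y_L}$.
\item \emph{Boundary of $\mathcal S$.} If $L_k\to\partial\mathcal S$, then $\rho(A_{L_k})\to1$. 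Take a unit left eigenvector $v$ with $v^\ast A_{L_k}=\lambda v^\ast$. Then $v^\ast Y_{L_k} v=\sum_t|\lambda|^{2t}\|Hv\|^2$ (using the right eigenvector of $A_L^\intercal$). This could fail to blow up if $Hv\to0$, so I will instead use $\tr{X H^\intercal H}$ together with $X_{(L,\gamma)}\succeq\gamma\lyap(A_L,I)$, whose trace blows up as $\rho\to1$. Observability then transfers this blow-up through $Y_L$: $\tr{\gamma(I+LL^\intercal)Y_L}\ge\gamma\tr{Y_L}$, and $Y_L=\lyap(A_L^\intercal,H^\intercal H)$ diverges because the eigenvector of $A_L$ with eigenvalue near the unit circle cannot lie in $\ker H$ in the limit (PBH, since $(A,H)$ is observable and the limit $A_{\bar L}$ has the same unobservable modes as $A$).
\item \emph{Sublevel sets.} Compactness follows from continuity plus coercivity, and coercivity also gives $\mathcal S_{(\alpha,\gamma)}\subset\mathcal S$. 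The nesting $\mathcal S_{(\alpha,\gamma_1)}\subset\mathcal S_{(\alpha,\gamma_2)}$ for $\gamma_2\le\gamma_1$ holds because $J_{\mathrm{R}}$ is monotone in $\gamma$.
\end{itemize}

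\textbf{Minimizer and PL.} Because the sublevel set is compact, a minimizer exists in the interior, so $\nabla J_{\mathrm{R}}=0$. By \cref{eq:regularized-grad} and invertibility of $Y_L$, this gives $L R_\gamma = A_L X H^\intercal$, which rearranges to $L=AXH^\intercal(R_\gamma+HXH^\intercal)^{-1}$. Uniqueness follows from the Riccati comparison: with $Q_\gamma\succ0$, the stabilizing solution of the filter DARE is unique.

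For the PL inequality I use the standard completion-of-squares identity. For $L,L^*$ write
\[
J_{\mathrm{R}}(L)-J_{\mathrm{R}}(L^*)=\tr{\lyap(A_{L^*},(L-L^*)(R_\gamma+HX_L H^\intercal)(L-L^*)^\intercal\text{-type terms})\,\cdot}.
\]
More precisely, I will use the duality form with $E=LR_\gamma-A_LX_LH^\intercal$, which gives
\[
J_{\mathrm{R}}(L)-J_{\mathrm{R}}(L^*)\le \tr{E(R_\gamma+HX H^\intercal)^{-1}E^\intercal\,Y_{L^*}}.
\]
Next I bound $(R_\gamma+HXH^\intercal)^{-1}\preceq\gamma^{-1}I$, and $\|E\|\le \|\nabla J_{\mathrm{R}}\|/(2\lambdamin(Y_L))\le\|\nabla J_{\mathrm{R}}\|/(2\kappa_{(\alpha,\gamma)})$. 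These yield $c=\lambdamax(Y_{L^*})/(4\gamma\kappa^2)$, which is within the stated constant. Here $\kappa>0$ because $Y_L\succ0$ continuously on a compact set. For monotonicity, the sublevel sets shrink as $\gamma$ grows, so $\kappa$ increases, and $1/\gamma$ decreases; hence $c$ is decreasing in $\gamma$.

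\textbf{Lipschitz gradient.} I differentiate \cref{eq:regularized-grad} using \Cref{lem:dlyap}. Both $\|X_{(L,\gamma)}\|$ and $\|Y_L\|$, as well as $\|L\|^2$, are bounded on $\mathcal S_{(\alpha,\gamma)}$ by $\alpha/\gamma$-type quantities from the coercivity bounds above. The perturbation of the Lyapunov solutions is bounded via $\lyap(A_L,\cdot)$ norms, and these get absorbed into a quantity $\xi_{(\alpha,\gamma)}$. Since these quantities are suprema over sublevel sets that shrink in $\gamma$, $\xi$ is non-increasing. A mean-value argument along segments then gives the Lipschitz bound, after covering by a slightly larger sublevel set if the segment leaves the set.

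I expect the main obstacle to be the boundary coercivity when $Q$ is singular, that is, making the PBH/observability argument rigorous.
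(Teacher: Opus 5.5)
Your overall strategy is the paper's. Since $Q_\gamma,R_\gamma\succeq\gamma I$, parts 1--4 reduce to the positive-definite Kalman-policy analysis, and part 5 follows from second-order bounds on the sublevel set. Your stationarity/DARE uniqueness argument and your Fazel-type PL bound (via $\nabla_L J_{\mathrm{R}}=-2Y_LE_L$, giving $1/4$ in place of $1/2$) are fine.

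\textbf{Gap: coercivity as $\|L_k\|\to\infty$.} Had you simply invoked the cited coercivity lemma with $Q_\gamma,R_\gamma$, as the paper does, this half would be covered by citation. The self-contained argument you give for it does not work, and compactness, $\kappa_{(\alpha,\gamma)}>0$ and everything downstream rest on it. The inequality $J_{\mathrm{R}}\ge\gamma\lambdamin(Y_L)\|L\|_F^2$ proves nothing unless $\lambdamin(Y_L)$ stays bounded away from zero along the sequence. Nothing you state guarantees that:
\begin{itemize}
\item $Y_L\succeq H^\intercal H$ gives no lower bound when $H^\intercal H$ is singular, which is exactly the regime the paper targets;
\item pointwise $Y_L\succ0$ is not uniform in $L$;
\item appealing to $\kappa_{(\alpha,\gamma)}$ would be circular.
\end{itemize}
The missing step is an observability argument. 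On $\{J_{\mathrm{R}}\le\alpha\}$ you have $\sum_{t\ge0}\|HA_L^tL\|_F^2=\tr{L^\intercal Y_L L}\le\alpha/\gamma$. The identity $A_L^t=A^t-\sum_{j=0}^{t-1}A_L^jLHA^{t-1-j}$ gives
\[
HA^tL=HA_L^tL+\sum_{j=0}^{t-1}(HA_L^jL)(HA^{t-1-j}L).
\]
By induction, $\|HA^tL\|$ for $t=0,\dots,n-1$ is bounded by a constant depending only on $\alpha/\gamma$. Observability of $(A,H)$ then bounds $\|L\|$, using only $R_\gamma\succeq\gamma I$.

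Your boundary argument works once stated with a \emph{right} eigenvector $w$ of the limit $A_{\bar L}$: $A_{\bar L}w=\lambda w$ with $|\lambda|=1$, $Hw\neq0$ by PBH, and $w^\ast Y_{L_k}w\ge\sum_{t<N}\|HA_{L_k}^tw\|^2\to N\|Hw\|^2$. With the left eigenvector you wrote, your formula for $v^\ast Y_{L_k}v$ is wrong.

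\textbf{Smaller points.}

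\emph{Monotonicity.} Shrinking sublevel sets alone do not give the monotonicity claims. In $c(\alpha,\gamma)$ the minimizer $L^*_\gamma$ inside $\lambdamax(Y_{L^*_\gamma})$ moves with $\gamma$, and in your $\xi$ the quantities $X_{(L,\gamma)}$ and $R_\gamma$ grow with $\gamma$ for fixed $L$. For $c$ you can replace $\lambdamax(Y_{L^*_\gamma})$ by the non-increasing bound $\sup_{L\in\mathcal{S}_{(\alpha,\gamma)}}\lambdamax(Y_L)$; the paper is silent on this as well.

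\emph{Lipschitz constant.} Your part 5 sketch does not yield the stated $\ell(\alpha,\gamma)$. The paper obtains it from the explicit Hessian $2\tr{a_{(L,\gamma)}}-4\tr{b_{(L,\gamma)}}$. It bounds the first term with $\tr{Y_L}\le\alpha/\gamma$ and $\tr{HX_{(L,\gamma)}H^\intercal}\le\alpha$. It bounds the second with a Lyapunov-comparison bound on $\|\diff X_{(L,\gamma)}[E]\|_2$, which is what defines $\xi_{(\alpha,\gamma)}$.

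\emph{Mean-value step.} Enlarging the sublevel set does not keep the segment $[L,L']$ inside the non-convex set $\mathcal{S}$. A Hessian bound therefore gives the Lipschitz estimate only for nearby pairs, which is enough for a descent lemma. The paper's proof has the same limitation.
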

\begin{proof}
    Lemmas 1 and 2 of \cite{talebi2023data} establish coercivity and gradient dominance respectively for the case of $Q,R\succ0$. Following the same argument using the form of the regularized cost from \cref{eq:regularized-cost-Y_L} in \Cref{lem:regularized-gradient} and noting that $Q_\gamma,R_\gamma\succ0$ shows parts 1 through 4.
    For part 5, like in the proof of the dual LQR problem's gradient dominance \cite[Proposition 3.10]{bu2019lqr}, we can show that the Hessian of $J_{\mathrm{R}}$ is characterized by
    \begin{align*}
        \diff^2J_{\mathrm{R}}(L,\gamma)[E,E]=&2\tr{(R_\gamma E^\intercal+HX_{(L,\gamma)}H^\intercal E^\intercal Y_L)E}\\
        &-4\tr{H\diff X_{(L,\gamma)}[E]A_L^\intercal Y_LE}\\
        \eqqcolon&2\tr{a_{(L,\gamma)}}-4\tr{b_{(L,\gamma)}}.
    \end{align*}
    We want to bound the magnitude of the Hessian, so consider
    \begin{align*}
        &\max\limits_{L\in{\mathcal{S}_{(\alpha,\gamma)}}}\sup\limits_{\|E\|_F=1}|\diff^2J_{\mathrm{R}}(L)[E,E]|\\
        &\leq\max\limits_{L\in{\mathcal{S}_{(\alpha,\gamma)}}}\sup\limits_{\|E\|_F=1}(2|\tr{a_{(L,\gamma)}}|)+\sup\limits_{\|E\|_F=1}(4|\tr{b_{(L,\gamma)}}|).
    \end{align*}
    Let $E$ by any unit norm tangent vector. Because $\tr{Y_L}\leq\alpha/\gamma$ and $\tr{X_{(L,\gamma)}H^\intercal H}=J_{\mathrm{R}}(L,\gamma)\leq\alpha$,
    \begin{align}   |\tr{a_{(L,\gamma)}}|\leq\alpha(\lambdamax(R)+\gamma+\alpha)/\gamma.\label{eq:a_upperbound} 
    \end{align}
    Clearly, $a_{(\alpha,\gamma)}$ is decreasing in $\gamma$. We now consider the second term. From \cite[Proposition 7.7]{bu2019lqr}, 
    \begin{align}
        \tr{Y_L^{1/2}}\leq\sqrt{(2\alpha/\gamma)},\text{  and  }\|A_LY_L^{1/2}\|_2\leq\sqrt{\alpha/\gamma}.\label{eq:prop2.1LQR}
    \end{align}
    
    To deal with the $\diff X_{(L,\gamma)}[E]$ term, we define the following terms that are non-increasing in $\gamma$:
    \[
        \mathcal{X}_{(\alpha,\gamma)}\coloneq\max\limits_{L\in \mathcal{S}_{(\alpha,\gamma)}}\tr{X_L} \text{, and }
        \mathcal{Z}_{(\alpha,\gamma)}\coloneq\max\limits_{L\in \mathcal{S}_{(\alpha,\gamma)}}\tr{\lyap(A_L,I)}.  
    \]
    Recall that $\diff X_{(L,\gamma)}[E]=\lyap(A_L,V)$ for a matrix $V$ depending on the problem parameters. To bound $V$, consider that
    \begin{align}
        &\diff X_{(L,\gamma)}[E]-A_L\diff X_{(L,\gamma)}[E]A_L^\intercal \notag\\
        &=-EHX_{(L,\gamma)}A_L^\intercal-A_LX_{(L,\gamma)}(EH)^\intercal+ER_\gamma L^\intercal+LR_\gamma E^\intercal \notag\\
        &\preceq A_LX_{(L,\gamma)}A_L^\intercal+(EH)X_{(L,\gamma)}(EH)^\intercal+ER_\gamma E^\intercal+LR_\gamma L^\intercal \notag\\
        &=X_{(L,\gamma)}-Q_\gamma +(EH)X_{(L,\gamma)}(EH)^\intercal+ER_\gamma E^\intercal \notag\\
        &\preceq(\tr{X_{(L,\gamma)}}+\alpha+\lambdamax(R)+\gamma)Q_\gamma/\gamma,\label{eq:dX_second_arg_upperbound}
    \end{align}
    where in the first inequality we used part (b.1) of \cite[Proposition 2.1]{bu2019lqr}. By part (c) of the same proposition, this implies that $\diff X_{(L,\gamma)}[E]$ is less than or equal to the following in Loewner ordering:
    \begin{align}
        \frac{1}{\gamma}(\tr{X_{(L,\gamma)}}+\alpha+\lambdamax(R)+\gamma)(X_L+\gamma\mathcal{Z}_{(\alpha,\gamma)}I).\label{eq:dX_Loewner_bound}
    \end{align}
    Using (b.2) of \cite[Proposition 2.1]{bu2019lqr} in the first inequality of \cref{eq:dX_second_arg_upperbound} instead of (b.1) and combining the result with \cref{eq:dX_Loewner_bound} shows that 
    \begin{align}
        &\|\diff X_{(L,\gamma)}[E]\|_2 \label{eq:dX_operator_norm_bound}\\
        &\leq\frac{1}{\gamma}[\tr{X_{(L,\gamma)}}+\alpha+\lambdamax(R)+\gamma](\mathcal{X}_{(\alpha,\gamma)}+\gamma\mathcal{Z}_{(\alpha,\gamma)}). \notag
    \end{align}
    To address the $\tr{X_{(L,\gamma)}}$ term, it will be helpful to observe that $\|L\|_F^2\leq\alpha/(\gamma\kappa_{(\alpha,\gamma)})$, which follows from the fact that $L\in \mathcal{S}_{(\alpha,\gamma)}$. Thus 
    $\tr{\lyap(A_L,LL^\intercal)}\leq\tr{\lyap(A_L,\tr{LL^\intercal}I)}\leq[\alpha/(\gamma\kappa_{(\alpha,\gamma)})]\mathcal{Z}_{(\alpha,\gamma)}.$ From this inequality and the linearity of $\lyap$ in its second argument,
    \begin{align}
        \tr{X_{(L,\gamma)}}&=\tr{X_L}+\gamma\tr{\lyap(A_L,I+LL^\intercal)} \notag\\
        &\leq\mathcal{X}_{(\alpha,\gamma)}+(\gamma+\alpha/\kappa_{(\alpha,\gamma)})\mathcal{Z}_{(\alpha,\gamma)}.
        \label{eq:tr_X_gamma_upperbound}
    \end{align}
    Substituting \cref{eq:tr_X_gamma_upperbound} into \cref{eq:dX_second_arg_upperbound}, we can see that $\|\diff X_{(L,\gamma)}[E]\|_2$ is upper bounded by a function that is decreasing in $\gamma$. Let $\xi_{(\alpha,\gamma)}$ denote this function. Then by \cref{eq:prop2.1LQR}, \cref{eq:dX_operator_norm_bound}, and \cref{eq:tr_X_gamma_upperbound},
    \begin{align}
        |\tr{b_{(L,\gamma)}}|&\leq\|H\|_2\|\diff X_{(L,\gamma)}[E]\|_2\|A_LY_L^{1/2}\|_2\tr{Y_L^{1/2}} \notag \\
        &\leq\sqrt{2}\|H\|_2\alpha\xi_{(\alpha,\gamma)}/\gamma, \label{eq:b_upperbound}
    \end{align}
    which is a decreasing function of $\gamma$. Combining \cref{eq:a_upperbound} and \cref{eq:b_upperbound} justifies the claimed Lipschitz constant. 
\boxend\end{proof}

Finally, these results are sufficient to provide recursive feasibility and convergence guarantees for Algorithm 1 which recovers the globally optimal filtering policy from measurement data despite the lack of fully excited noise. To simplify our probabilistic analysis, we consider almost surely bounded measurement and process noise with zero mean; the extension to sub-Gaussian noise follows by a standard argument involving Bernstein's concentration inequalities. 

\begin{theorem}
Suppose $(A,H)$ is observable. Assume that $\|x_0\|$, $\|\xi(t)\|\leq \kappa_\xi$, and $\|\omega(t)\|\leq \kappa_\omega$ for all $t$ (almost surely), and the initial state has zero mean, i.e., $m_0 = 0$. Consider $L_0\in\mathcal{S}$, fix $\gamma_0\geq \gamma_{\min}>0$ and $\alpha \geq J_{\mathrm{R}}(L_0,\gamma_0)$, and set stepsize 
\(\eta=\frac{2}{9 \ell(\alpha,\gamma_{\min})}.\)
Then, for any $\gamma\geq\gamma_{\min}$ and $\epsilon,\delta>0$, with probability at least $1-\delta$ the internal loop terminates in $\mathcal{O}(\ln(1/\epsilon))$ iterations if 
\[T\geq \mathcal{O}\left(\ln(\frac{1}{\epsilon})\right)\quad \text{and}\quad M \geq \mathcal{O}\left(\ln(\frac{1}{\delta})\frac{1}{\epsilon}\ln(\ln(\frac{1}{\epsilon})) \right).\]
Furthermore, the optimality gap decays linearly as 
\[J_{\mathrm{R}}(L_k,\gamma_k) - J_{\mathrm{MSE}}(L^*) \leq \alpha (2-\beta)\max\{\gamma_0 \beta^k,\gamma_{\min}\}.\]\label{thm:convergence}
\end{theorem}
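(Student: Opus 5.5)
The proof splits into three parts: the inner loop for a fixed $\gamma$, the probabilistic analysis of the data-driven oracle, and a chaining argument across continuation steps. Throughout we work on the sublevel set $\mathcal{S}_{(\alpha,\gamma)}$. By part 2 of \Cref{prop:properties} this set is compact, contained in $\mathcal{S}$, and satisfies $\mathcal{S}_{(\alpha,\gamma)}\subset\mathcal{S}_{(\alpha,\gamma_{\min})}$ for every $\gamma\ge\gamma_{\min}$. Parts 4 and 5 are monotone in $\gamma$, so a single pair $\ell(\alpha,\gamma_{\min})$, $c(\alpha,\gamma_{\min})$ works for all continuation steps, which is why the stepsize $\eta=2/(9\ell(\alpha,\gamma_{\min}))$ is uniform.

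\textbf{Inner loop.} We write the oracle output as $\nabla_L J_{\mathrm{R}}(L,\gamma)+\Delta$ and use the descent lemma, which is valid by part 5 as long as the segment stays in the sublevel set. This gives
\[
J_{\mathrm{R}}(L^+,\gamma)\le J_{\mathrm{R}}(L,\gamma)-\eta\Big(1-\tfrac{\eta\ell}{2}\Big)\|\nabla J_{\mathrm{R}}\|^2+\eta(1+\eta\ell)\|\nabla J_{\mathrm{R}}\|\|\Delta\|+\tfrac{\eta^2\ell}{2}\|\Delta\|^2 .
\]
With $\eta\ell=2/9$ and Young's inequality, the gradient term dominates when $\|\Delta\|\le\tfrac13\|\nabla J_{\mathrm{R}}\|$. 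The while-loop guard keeps $\|\nabla J_{\mathrm{R}}\|^2$ bounded below by $(1-\beta)\alpha\gamma_k/c$, so it suffices to make $\|\Delta\|^2\le\epsilon'$ for an $\epsilon'$ proportional to this threshold. Under that condition the PL inequality (part 4) yields the contraction
\[
J_{\mathrm{R}}(L^+)-J_{\mathrm{R}}(L^*_\gamma)\le\big(1-\tfrac{c'\eta}{c}\big)\big(J_{\mathrm{R}}(L)-J_{\mathrm{R}}(L^*_\gamma)\big).
\]
This contraction has two consequences. First, the iterates never leave $\mathcal{S}_{(\alpha,\gamma)}$, which gives recursive feasibility. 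Second, since $\|\nabla J_{\mathrm{R}}\|^2\le 2\ell\,(J_{\mathrm{R}}-J^*_\gamma)$ by smoothness, the guard fails after $\mathcal{O}(\ln(1/\epsilon))$ iterations.

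\textbf{Oracle error.} We decompose $\Delta$ into a bias and a variance part. The bias comes from the finite horizon $T$ (with $m_0=0$). Each term of $\nabla_L\varepsilon$ in \Cref{prop:approx-grad} carries factors $A_L^{t}$, and $\rho(A_L)<1$ holds uniformly on the compact set $\mathcal{S}_{(\alpha,\gamma_{\min})}$. The tail is therefore $\mathcal{O}(\tau^T)$ for some $\tau<1$, and $T=\mathcal{O}(\ln(1/\epsilon))$ makes the bias below $\sqrt{\epsilon'}/2$. For the variance, the a.s.\ bounds on $x_0,\xi,\omega$ make $y(t)$ bounded, so $\nabla_L\varepsilon$ is a.s.\ bounded by a constant that is uniform on the sublevel set and polynomial in $T$. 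A matrix Hoeffding/Bernstein bound then gives deviation at most $\sqrt{\epsilon'}/2$ with probability $1-\delta'$ once $M\gtrsim\epsilon^{-1}\ln(1/\delta')$. A union bound over the $\mathcal{O}(\ln(1/\epsilon))$ iterations with $\delta'=\delta/\mathcal{O}(\ln(1/\epsilon))$ produces the $\ln\ln(1/\epsilon)$ factor in $M$.

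\textbf{Outer loop.} When the inner loop exits we have $\|\nabla J_{\mathrm{R}}\|^2<(1-\beta)\alpha\gamma_k/c(\alpha,\gamma_{\min})$. Since $c(\alpha,\gamma_k)\le c(\alpha,\gamma_{\min})$, part 4 gives $J_{\mathrm{R}}(L_{k+1},\gamma_k)-J_{\mathrm{R}}(L^*_{\gamma_k},\gamma_k)\le(1-\beta)\alpha\gamma_k$. Next we compare $J_{\mathrm{R}}(L^*_{\gamma},\gamma)$ with $J_{\mathrm{MSE}}(L^*)$. Optimality of $L^*_\gamma$ gives $J_{\mathrm{R}}(L^*_\gamma,\gamma)\le J_{\mathrm{R}}(L^*,\gamma)=J_{\mathrm{MSE}}(L^*)+\gamma\|[I\;L^*]\|_{Y_{L^*}}^2$. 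We bound the last term by $\alpha\gamma$; this needs the normalization that $\alpha$ dominates $\|[I\;L]\|^2_{Y_L}$ on the relevant set, or alternatively we use $J_{\mathrm{R}}(\cdot,\gamma)\le\alpha$ together with $\gamma\le\gamma_0$. Summing the two bounds gives $\alpha(2-\beta)\gamma_k$. Finally, the regularizer term is monotone in $\gamma$, so $J_{\mathrm{R}}(L_{k+1},\gamma_{k+1})\le J_{\mathrm{R}}(L_{k+1},\gamma_k)\le\alpha$. Hence $L_{k+1}\in\mathcal{S}_{(\alpha,\gamma_{k+1})}$ and the induction continues, and $\gamma_k=\max\{\gamma_0\beta^k,\gamma_{\min}\}$ gives the stated rate.

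The main obstacle is the stochastic feasibility argument: making sure that a noisy step never leaves $\mathcal{S}_{(\alpha,\gamma)}$, which is needed before the descent lemma can be applied at all. I would handle it by bounding the step length $\eta(\|\nabla J_{\mathrm{R}}\|+\|\Delta\|)$ and using the positive distance from $\mathcal{S}_{(\alpha,\gamma)}$ to the boundary of the slightly larger set $\mathcal{S}_{(2\alpha,\gamma_{\min})}$. This lets the descent lemma be applied with constants $\ell(2\alpha,\gamma_{\min})$, and the resulting factors are absorbed into the $\tfrac29$ stepsize.
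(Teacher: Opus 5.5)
Your overall route is the paper's. The inner loop is the PL-plus-smoothness analysis of gradient descent with a biased, high-probability-accurate oracle, which the paper imports from Theorems 1 and 3 of \cite{talebi2023data} rather than re-deriving. Your outer bound is the paper's final decomposition, with $E_k\le(1-\beta)\alpha\gamma_k$ obtained from the exit test and PL; the paper's warm-start recursion $E_{k+1}\le q^m(E_k+(1-\beta)\alpha\gamma_k)$ is not needed for the stated bound.

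\textbf{The gap: the step $\gamma\|[I\;L^*]\|^2_{Y_{L^*}}\le\alpha\gamma$.} Your fallback does not give it. The condition $J_{\mathrm{R}}(L,\gamma)\le\alpha$ only yields $\gamma\|[I\;L]\|^2_{Y_L}\le\alpha$, and $L^*$ need not lie in the sublevel set anyway. In fact no argument can give it, because the stated constant is false without an extra normalization. For every $L\in\mathcal{S}$,
\[
J_{\mathrm{R}}(L,\gamma)-J_{\mathrm{MSE}}(L^*)\ge\gamma D_{\min},\qquad D_{\min}:=\min_{L\in\mathcal{S}}\|[I\;L]\|^2_{Y_L}>0 .
\]
Take $Q=R=0$, let $L_0$ be a minimizer of this norm, and set $\alpha=J_{\mathrm{R}}(L_0,\gamma_0)=\gamma_0D_{\min}$. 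Then the claimed bound $(2-\beta)\alpha\gamma_k$ fails for every $k$ as soon as $\gamma_0<1/(2-\beta)$, and by continuity it also fails for small nonzero $Q,R$. The paper's proof disposes of this term ``by a similar argument as in \eqref{eq:delta-gamma-upperbound}''. That argument tacitly uses $\tr{(I+LL^\intercal)Y_L}\le J_{\mathrm{R}}(L,\gamma)$, which fails in the same example whenever $\gamma<1$. So commit to your first option and state it as a hypothesis, $\alpha\ge\|[I\;L^*]\|^2_{Y_{L^*}}$. Otherwise carry the rate as $\bigl((1-\beta)\alpha+\|[I\;L^*]\|^2_{Y_{L^*}}\bigr)\gamma_k$.

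\textbf{Repair 1: feasibility.} Your fix through $\mathcal{S}_{(2\alpha,\gamma_{\min})}$ does not fit the prescribed $\eta=2/(9\ell(\alpha,\gamma_{\min}))$. The constant $\ell(2\alpha,\gamma_{\min})$ can exceed $\ell(\alpha,\gamma_{\min})$ by more than the slack in $2/9$ tolerates; its first term alone grows like $\alpha^2$. You would also need $\eta\|g+\Delta\|$ to be smaller than an unquantified distance to that set's boundary. A first-exit argument inside $\mathcal{S}_{(\alpha,\gamma)}$ works with the stated stepsize. The paper's Hessian bound holds at every point of the sublevel set. So while the segment $s\mapsto L-s(g+\Delta)$ stays there, $\|\Delta\|\le\frac13\|g\|$ and $\eta\,\ell(\alpha,\gamma)\le 2/9$ give
\[
J_{\mathrm{R}}(L-s(g+\Delta),\gamma)\le J_{\mathrm{R}}(L,\gamma)-\tfrac{38}{81}s\|g\|^2,\qquad s\le\eta .
\]
Hence the segment cannot leave $\mathcal{S}_{(\alpha,\gamma)}$ before $s=\eta$, which gives feasibility and descent at once.

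\textbf{Repair 2: the oracle bound.} The stated $M$ requires an a.s.\ bound on $\nabla_L\varepsilon$ that does not depend on $T$. A polynomial-in-$T$ bound adds polylogarithmic factors to $M$. Moreover, $A$ is not assumed Schur stable, so $y(t)$ itself is not uniformly bounded. Regroup the double sum in Proposition~\ref{prop:approx-grad}, using $m_0=0$, as
\[
\nabla_L\varepsilon=-2\sum_{t}(A_L^\intercal)^tH^\intercal e_T(L)\,\bigl(y(T-t-1)-H\hat{x}_L(T-t-1)\bigr)^\intercal .
\]
The innovations $y-H\hat{x}_L$ are a.s.\ bounded uniformly in time on the sublevel set, which supplies the $T$-independent constant you need.
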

\begin{proof}
    Under the hypothesis and the choice of stepsize, we can show that that Assumption 2, 3 and 4 in \cite{talebi2023data} are all satisfied if $\gamma \geq \gamma_{\min}>0$. Therefore, if the trajectory length $T$ and batch-size $M$ is large enough (in particular satisfying the rates stated above), by \cite[Theorem 3]{talebi2023data} the inner-loop converges in a linear rate. We compute this rate explicitly and then show that the outer-loop also has a linear convergence rate---due to geometric scheduling of the regularizer.
    
    Set \(f_k(L)=J_{\mathrm{R}}(L,\gamma_k)\), and denote by \(L_k^\star\) its minimizer. Let $L_{k,m}$ be the policy obtained by Algorithm 1 after $m$ inner iteration and $k$ outer iterations, and let $L_k$ be the output of the $k$-th inner loop.
    Recall the PL constant $c(\alpha,\gamma)$ defined in \Cref{prop:properties} and note that $c(\alpha,\cdot)$ is a decreasing function.
    Therefore, by \cite[Theorem 1]{talebi2023data}, we have
    \begin{equation*}
        f_k(L_{k,m}) - f_k(L_{k}^*) 
        \leq  q^m \left(f_k(L_{k,0}) - f_k(L_{k}^*) \right)
    \end{equation*}
    and terminated by
     \[
    \|\nabla f_k(L_k)\|_F^2 \le \frac{\varepsilon_k}{c(\alpha,\gamma_{\min})},
    \qquad
    \varepsilon_k \coloneqq C\,\gamma_k
    \]
    for some constant \(C>0\) independent of \(k\), with 
    \[q \coloneqq 1-\frac{\eta}{4 c(\alpha,\gamma_{\min})}\in(0,1).\]
    Define the gap $E_k \coloneq f_k(L_k) - f_k(L_k^*)$.
    At the outer stage \(k+1\), the inner loop is run for \(m\) gradient steps such that
    \begin{align}\label{eq:final-error-bound}
        E_{k+1} 
        \leq q^m \left(f_{k+1}(L_{k}) - f_{k+1}(L_{k+1}^*) \right)
    \end{align}
    because $L_{k+1,0} = L_k$.
    By \Cref{lem:regularized-gradient} and the definition of $L_{k+1}^*$ we have that
    \(f_{k+1}(L_{k+1}^*) \leq f_{k+1}(L_0) \leq f_{0}(L_0) \leq \alpha \)
    whenever $\gamma_{0} \geq \gamma_{k}\geq \gamma_{k+1}\geq \gamma_{\min}$, because $J_{\mathrm{R}}(L_0,\gamma) \leq f_0(L_0) \leq \alpha$ for all $\gamma\leq \gamma_0$; and thus,
    \begin{align}
    f_k(L_{k+1}^*)& -f_{k+1}(L_{k+1}^*) \nonumber\\
    &= (\gamma_k-\gamma_{k+1})\tr{[I + (L_{k+1}^*)^\intercal L_{k+1}^*] Y_{(L_{k+1}^*)}} \nonumber\\
    &\leq (1-\beta) \gamma_k f_{k+1}(L_{k+1}^*) \leq (1-\beta) \gamma_k \alpha \label{eq:delta-gamma-upperbound}
    \end{align}
    Also, by definitions of $f_k$, $L_k$, $L_k^*$, and $L_{k+1}^*$ we have
     \(
      f_{k+1}(L_k) -  f_k(L_k) \leq 0\) and \(
    f_k(L_k^*) - f_k(L_{k+1}^*) \leq 0.
    \)
    Therefore, by aggregating the last three upperbounds we obtain 
    \begin{align}\label{eq:initial-error-bound}
        f_{k+1}(L_{k})& - f_{k+1}(L_{k+1}^*)
        \leq f_k(L_k) - f_k(L_k^*) + (1-\beta)\alpha \gamma_k.
    \end{align}
    By combining \cref{eq:final-error-bound,eq:initial-error-bound} we obtain that
    \[E_{k+1} \leq q^m (E_k + (1-\beta)\alpha \gamma_k). \]
    Now, if $E_k \leq C_0 \gamma_k$ for some constant $C_0$ satisfying
    \[C_0 \geq (1-\beta)\alpha\]
    and $q^m \leq \beta/2$ then
    \[E_{k+1} \leq \frac{\beta}{2}(C_0 + (1-\beta)\alpha)\gamma_k \leq C_0 \gamma_{k+1}, \]
    as $\gamma_{k+1}=\beta \gamma_k$.
    Note that by PL property and the termination condition of the inner loop 
    \[E_k \leq c(\alpha,\gamma_{\min}) \|\nabla f_k(L_k)\|_F^2 \leq C_0 \gamma_k,\]
    and thus, by induction, we obtain the convergence rate  
    \[E_k \leq C_0 \gamma_k, \quad \forall k =1,2,\cdots.\]
    Finally,
    \begin{align*}
    J_{\mathrm{R}}(L_k,\gamma_k) - & J_{\mathrm{MSE}}(L^*)\\
        &=E_k + f_k(L_k^*) - f_k(L^*) + f_k(L^*) - J_{\mathrm{MSE}}(L^*)\\
        &\leq E_k + \gamma_k \alpha \leq (C_0 + \alpha)\gamma_k
    \end{align*}
    because $f_k(L_k^*) - f_k(L^*)\leq 0$ and $f_k(L^*) - J_{\mathrm{MSE}}(L^*) \leq \gamma_k \alpha$ by a similar argument as in \cref{eq:delta-gamma-upperbound}.
    Because \(\gamma_k=\max\{\gamma_0\beta^k,\gamma_{\min}\}\) the final claim follows, and for small enough $\gamma_{\min}>0$ the condition \(f_k(L_k) - f_0(L^*)\le \epsilon\) is guaranteed once
    \[\qquad\qquad K\geq \frac{\log(\epsilon/((C_0+\alpha)\gamma_0))}{\log \beta} = \mathcal{O}(\ln(\frac{1}{\epsilon})). \qquad\qquad \boxend\]
\end{proof}

\section{Simulations}\label{sec:simulation}
Here, we demonstrate the effectiveness of the proposed framework in improving estimation policies for a linear time-invariant (LTI) system.
Specifically, we consider a system with known dynamics $(A,H)$, where $n=4$ and $m=3$. Additional details on the generation of these system matrices are provided in the accompanying GitHub repository~\cite{talebi2026riemannian}. To deliberately construct an ill-conditioned estimation problem, the noise covariance matrices $Q$ and $R$, as well as the matrix $H^\intercal H$, are chosen to be singular.

Fixing the trajectory length $T=50$, we run Algorithm~1 using the stochastic gradient oracle in \Cref{prop:approx-grad}, with data from \cref{eqn:sysdyn}, and evaluate performance across varying batch sizes $M$. Then, fixing $M=20$, we also vary $T$ and run Algorithm 1.
As the choice of the trajectory length $T$ and the batch size $M$ affects the convergence behavior of Algorithm 1, we report the progress of the normalized MSE cost in \Cref{fig:sgd}, where each figure shows statistics over 50 rounds of simulation. Each round of simulation contains 20 continuation steps of 2 thousand iterations each, where $\gamma$ is scaled geometrically by a factor of $\beta=0.5$ each step. Furthermore, the normalized MSE cost is calculated as the performance of the current gain $L_K$ on the \textit{unregularized} objective $J_\mathrm{MSE}$ relative to the optimal solution $L^*$ to $J_\mathrm{MSE}$.

The results exhibit an initial phase of linear convergence, consistent with the theoretical guarantees established for the regularized objective under sufficiently accurate gradient estimates. As the iterates approach a neighborhood of the optimal solution, the convergence rate transitions to sublinear behavior. This degradation is expected, as the updates rely on stochastic approximations of the gradient, and the effect of estimation noise becomes dominant near optimality—contrasting with the linear convergence observed for \ac{gd} under an exact gradient oracle.
\Cref{prop:properties} illustrates convergence of the learned Kalman gain toward the optimal solution, in agreement with the structural properties of the objective function analyzed in \S\ref{sec:convergence}, particularly the gradient dominance condition established in \Cref{prop:properties}.

Finally, \Cref{fig:geometric_vs_euclidean} compares the performance of a conventional Euclidean $\ell_2$-regularization with the proposed Riemannian regularization on a structured problem which highlights the latter’s robustness, especially when as $\|L^*\|_F$ becomes larger:
\begin{gather*}
A = \begin{bmatrix}
    z & 1 & 0 \\
    0 & 0.5 & 1 \\
    0 & 0 & 0.5
\end{bmatrix},
\;H = \begin{bmatrix}
    1 & 0 & 0 \\
    0 & 1 & 0
\end{bmatrix},
\;Q = \begin{bmatrix}
    1 & 0 & 0 \\
    0 & 1 & 0 \\
    0 & 0 & 0
\end{bmatrix},
\;R = \begin{bmatrix}
    1 & 0 \\
    0 & 0
\end{bmatrix}
,
\end{gather*}
where $Q,R$, and $H^\intercal H$ are singular, $\gamma=0.1$, $\beta=0.25$, and $z$ is a hyperparameter controlling (proportionally) $\|L^*\|_F$. For each trial, we initialize a stabilizing gain $L_0$ with $z-0.5$ in the upper-left entry and zeros elsewhere, then run Algorithm 1 with the deterministic gradient oracle---to isolate the effect of regularization---for $20$ steps with 1000 inner iterations. For each $z$, the stepsize $\eta$ is the largest power of $10$ that results in convergence for both regularization types. The results demonstrate that for problems where the optimal gain is far from the origin, the Euclidean regularization fails to quickly converge to the optimal solution, as the indiscriminate penalty on $\|L\|_F$ drives the regularized solution away from $L^*$ and towards $0$. In contrast, Riemannian regularization converges more directly towards the optimal gain, even for large values of $z$ which result in an $L^*$ far from the origin, which aligns with the linear convergence of Algorithm 1 shown in \Cref{thm:convergence}. This highlights the benefit of the proposed Riemannian regularization and how compatible it is with the inherent geometry of the problem.

\begin{figure*}[t]
\vspace{-0.3cm}
    \centering
    \begin{subfigure}[b]{0.32\textwidth}
        \centering
        \includegraphics[width=\linewidth]{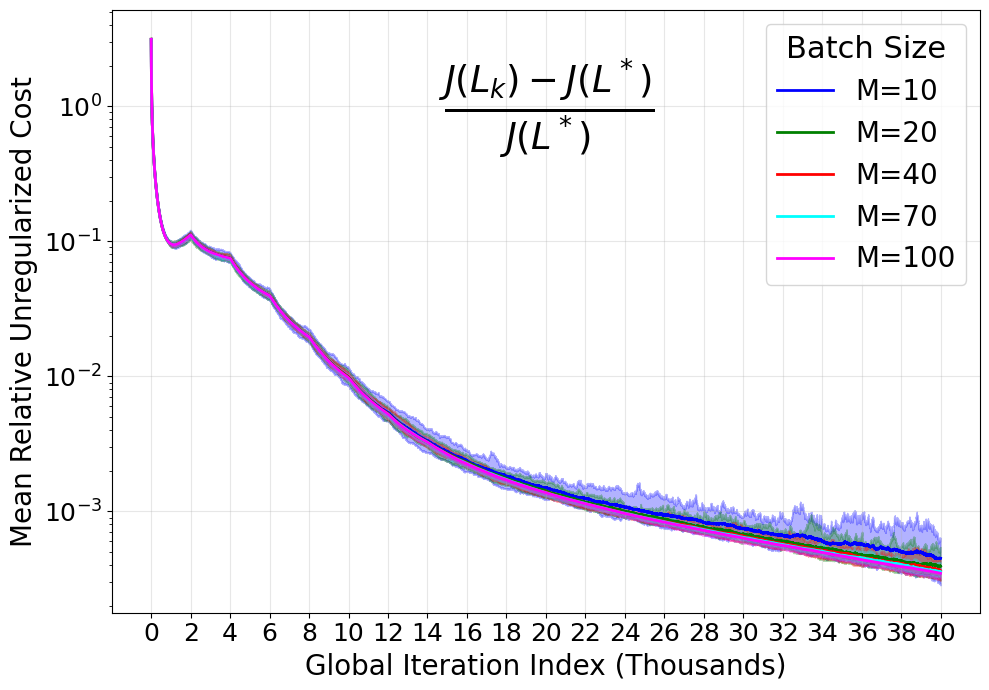}
        \caption{}
        \label{fig:a}
    \end{subfigure}
    \hfill
    \begin{subfigure}[b]{0.32\textwidth}
        \centering
        \includegraphics[width=\linewidth]{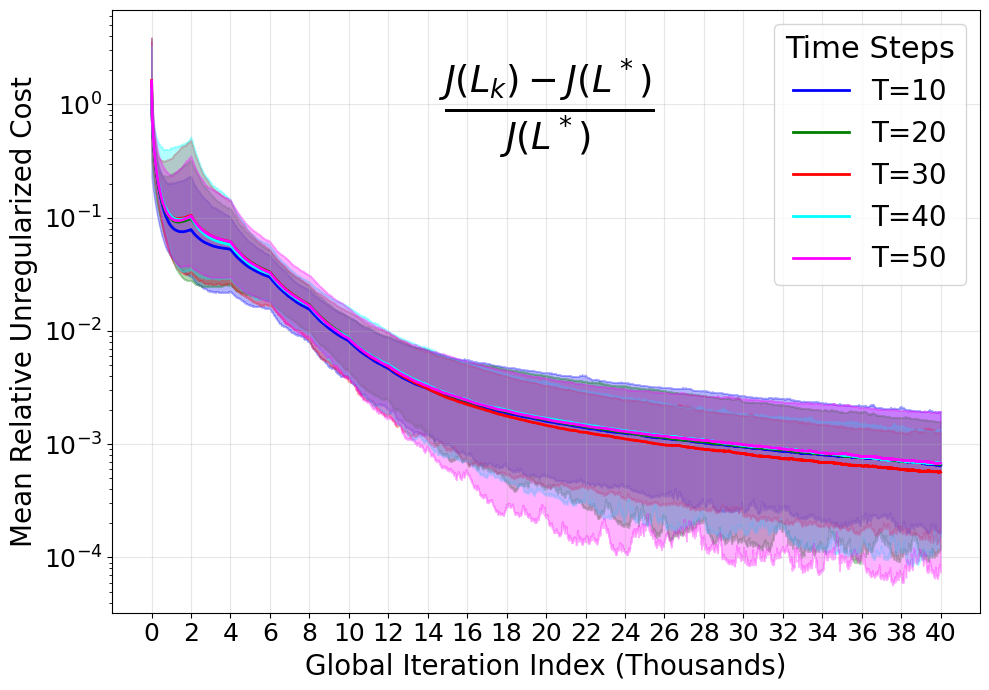}
        \caption{}
        \label{fig:b}
    \end{subfigure}
    \hfill
    \begin{subfigure}[b]{0.32\textwidth}
        \centering
        \includegraphics[width=\linewidth]{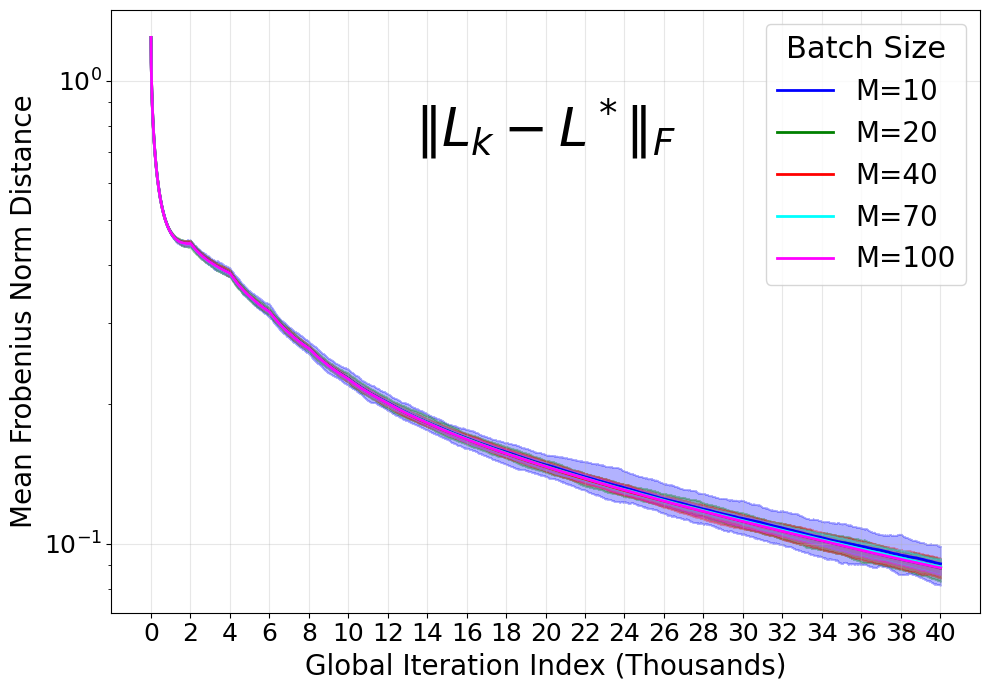}
        \caption{}
        \label{fig:c}
    \end{subfigure}
    \caption{\small Performance of our Riemanian-Regularized Kalman Policy Optimization (Algorithm 1) with the data-driven oracle and without knowledge of the \emph{singular} covariances. Vertical lines represent a new continuation step. Plots are of the mean progress over 50 trials with random initializations of (a) the estimation error for different batch sizes $M$, (b) the estimation error for different trajectory length $T$, and (c) the mean Frobenius norm distance between the $L_k$ and the optimal, un-regularized gain $L^*$ for various batch sizes. Shaded regions show 99\%-cover of the realizations per iteration step.}
    \label{fig:sgd}
    \vspace{-0.6cm}
\end{figure*}

\begin{figure}[t]
    \centering
    \includegraphics[width=0.35\textwidth]{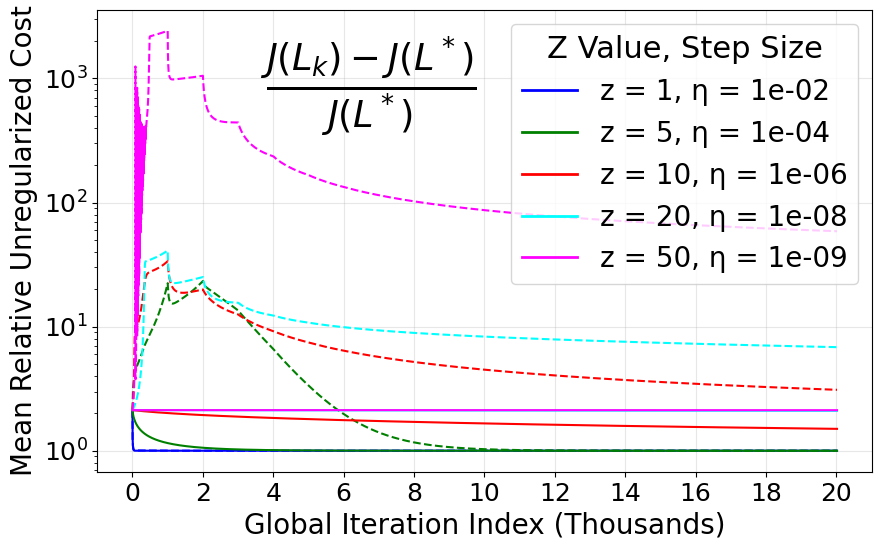}
    \caption{\small The benefit of the Riemannian regularization over the Euclidean one. Comparison of the convergence of the normalized, unregularized cost for various values of $z$ using the deterministic gradient oracle. Dashed lines illustrate the Euclidean $\ell_2$-regularization vs solid lines representing the Riemannian regularization.}
    \label{fig:geometric_vs_euclidean}
    \vspace{-0.6cm}
\end{figure}

\section{Conclusions}\label{sec:conclusion}
We studied the problem of learning the optimal steady-state Kalman gain in settings where the process and measurement noise covariances are both unknown and potentially singular, leading to an ill-conditioned problem. Leveraging the intrinsic geometry of the policy space, we introduced a Riemannian regularization that restores key structural properties of the objective, and developed a direct policy optimization algorithm based on a continuation scheme.
Our theoretical analysis establishes convergence guarantees for the proposed method, while empirical results demonstrate improved stability and performance compared to conventional Euclidean $\ell_2$-regularization. These findings highlight the effectiveness of incorporating geometric structure into data-driven estimation.
Future work will focus on extending this framework to account for model uncertainty, time-varying dynamics, and more general stochastic settings.
\bibliographystyle{ieeetr}
\bibliography{citations}

\begin{thebibliography}{10}

\bibitem{kalman1960new}
R.~E. Kalman, ``A new approach to linear filtering and prediction problems,''
  {\em ASME. Journal of Basic Engineering}, vol.~82, pp.~35--45, 03 1960.

\bibitem{mehra1970identification}
R.~Mehra, ``On the identification of variances and adaptive {Kalman}
  filtering,'' {\em IEEE Transactions on Automatic Control}, vol.~15, no.~2,
  pp.~175--184, 1970.

\bibitem{mehra1972approaches}
R.~Mehra, ``Approaches to adaptive filtering,'' {\em IEEE Transactions on
  Automatic Control}, vol.~17, no.~5, pp.~693--698, 1972.

\bibitem{carew1973identification}
B.~Carew and P.~Belanger, ``Identification of optimum filter steady-state gain
  for systems with unknown noise covariances,'' {\em IEEE Transactions on
  Automatic Control}, vol.~18, no.~6, pp.~582--587, 1973.

\bibitem{belanger1974estimation}
P.~R. Belanger, ``Estimation of noise covariance matrices for a linear
  time-varying stochastic process,'' {\em Automatica}, vol.~10, no.~3,
  pp.~267--275, 1974.

\bibitem{myers1976adaptive}
K.~Myers and B.~Tapley, ``Adaptive sequential estimation with unknown noise
  statistics,'' {\em IEEE Transactions on Automatic Control}, vol.~21, no.~4,
  pp.~520--523, 1976.

\bibitem{tajima1978estimation}
K.~Tajima, ``Estimation of steady-state {Kalman} filter gain,'' {\em IEEE
  Transactions on Automatic Control}, vol.~23, no.~5, pp.~944--945, 1978.

\bibitem{zhang2020identification}
L.~Zhang, D.~Sidoti, A.~Bienkowski, K.~R. Pattipati, Y.~Bar-Shalom, and D.~L.
  Kleinman, ``On the identification of noise covariances and adaptive {Kalman}
  filtering: A new look at a 50 year-old problem,'' {\em IEEE Access}, vol.~8,
  pp.~59362--59388, 2020.

\bibitem{magill1965optimal}
D.~Magill, ``Optimal adaptive estimation of sampled stochastic processes,''
  {\em IEEE Transactions on Automatic Control}, vol.~10, no.~4, pp.~434--439,
  1965.

\bibitem{hilborn1969optimal}
C.~G. Hilborn and D.~G. Lainiotis, ``Optimal estimation in the presence of
  unknown parameters,'' {\em IEEE Transactions on Systems Science and
  Cybernetics}, vol.~5, no.~1, pp.~38--43, 1969.

\bibitem{matisko2010noise}
P.~Matisko and V.~Havlena, ``Noise covariances estimation for {Kalman} filter
  tuning,'' {\em IFAC Proceedings Volumes}, vol.~43, no.~10, pp.~31--36, 2010.

\bibitem{kashyap1970maximum}
R.~Kashyap, ``Maximum likelihood identification of stochastic linear systems,''
  {\em IEEE Transactions on Automatic Control}, vol.~15, no.~1, pp.~25--34,
  1970.

\bibitem{shumway1982approach}
R.~H. Shumway and D.~S. Stoffer, ``An approach to time series smoothing and
  forecasting using the {EM} algorithm,'' {\em Journal of Time Series
  Analysis}, vol.~3, no.~4, pp.~253--264, 1982.

\bibitem{odelson2006new}
B.~J. Odelson, M.~R. Rajamani, and J.~B. Rawlings, ``A new autocovariance
  least-squares method for estimating noise covariances,'' {\em Automatica},
  vol.~42, no.~2, pp.~303--308, 2006.

\bibitem{aakesson2008generalized}
B.~M. {\AA}kesson, J.~B. J{\o}rgensen, N.~K. Poulsen, and S.~B. J{\o}rgensen,
  ``A generalized autocovariance least-squares method for {K}alman filter
  tuning,'' {\em Journal of Process Control}, vol.~18, no.~7-8, pp.~769--779,
  2008.

\bibitem{dunik2009methods}
J.~Dun{\'\i}k, M.~{\^S}imandl, and O.~Straka, ``Methods for estimating state
  and measurement noise covariance matrices: Aspects and comparison,'' {\em
  IFAC Proceedings Volumes}, vol.~42, no.~10, pp.~372--377, 2009.

\bibitem{kalman1960general}
R.~E. Kalman, ``On the general theory of control systems,'' in {\em Proceedings
  First International Conference on Automatic Control, Moscow, USSR},
  pp.~481--492, 1960.

\bibitem{pearson1966duality}
J.~Pearson, ``On the duality between estimation and control,'' {\em SIAM
  Journal on Control}, vol.~4, no.~4, pp.~594--600, 1966.

\bibitem{bu2019lqr}
J.~Bu, A.~Mesbahi, M.~Fazel, and M.~Mesbahi, ``{LQR} through the lens of first
  order methods: Discrete-time case,'' {\em arXiv preprint arXiv:1907.08921},
  2019.

\bibitem{bu2020policy}
J.~Bu, A.~Mesbahi, and M.~Mesbahi, ``Policy gradient-based algorithms for
  continuous-time linear quadratic control,'' {\em arXiv preprint
  arXiv:2006.09178}, 2020.

\bibitem{fazel2018global}
M.~Fazel, R.~Ge, S.~Kakade, and M.~Mesbahi, ``Global convergence of policy
  gradient methods for the linear quadratic regulator,'' in {\em Proceedings of
  the 35th International Conference on Machine Learning}, vol.~80,
  pp.~1467--1476, PMLR, 2018.

\bibitem{fatkhullin2020optimizing}
I.~Fatkhullin and B.~Polyak, ``Optimizing static linear feedback: Gradient
  method,'' {\em SIAM Journal on Control and Optimization}, vol.~59, no.~5,
  pp.~3887--3911, 2021.

\bibitem{krasiler2024output}
S.~Kraisler and M.~Mesbahi, ``Output-feedback synthesis orbit geometry:
  Quotient manifolds and lqg direct policy optimization,'' {\em IEEE Control
  Systems Letters}, vol.~8, pp.~1577--1582, 2024.

\bibitem{mohammadi2021linear}
H.~Mohammadi, M.~Soltanolkotabi, and M.~R. Jovanovic, ``On the linear
  convergence of random search for discrete-time {LQR},'' {\em IEEE Control
  Systems Letters}, vol.~5, no.~3, pp.~989--994, 2021.

\bibitem{zhao2021global}
F.~Zhao, K.~You, and T.~Ba{ş}ar, ``Global convergence of policy gradient
  primal-dual methods for risk-constrained {LQR}s,'' {\em arXiv preprint
  arXiv:2104.04901}, 2021.

\bibitem{talebi2024ergodic}
S.~Talebi and N.~Li, ``Ergodic-risk criterion for stochastically stabilizing
  policy optimization,'' {\em arXiv preprint arXiv:2409.10767}, 2024.

\bibitem{Tang2021analysis}
Y.~Tang, Y.~Zheng, and N.~Li, ``Analysis of the optimization landscape of
  linear quadratic gaussian ({LQG}) control,'' in {\em Proceedings of the 3rd
  Conference on Learning for Dynamics and Control}, vol.~144, pp.~599--610,
  PMLR, June 2021.

\bibitem{talebi2022policy}
S.~Talebi and M.~Mesbahi, ``Policy optimization over submanifolds for
  constrained feedback synthesis,'' {\em IEEE Transactions on Automatic Control
  (to appear), arXiv preprint arXiv:2201.11157}, 2022.

\bibitem{talebi2024policy}
S.~Talebi, Y.~Zheng, S.~Kraisler, N.~Li, and M.~Mesbahi, ``Policy optimization
  in control: Geometry and algorithmic implications,'' {\em arXiv preprint
  arXiv:2406.04243}, 2024.

\bibitem{talebi2022duality}
S.~Talebi, A.~Taghvaei, and M.~Mesbahi, ``Duality-based stochastic policy
  optimization for estimation with unknown noise covariances,'' {\em arXiv
  preprint arXiv:2210.14878}, 2022.

\bibitem{talebi2023data}
S.~Talebi, A.~Taghvaei, and M.~Mesbahi, ``Data-driven optimal filtering for
  linear systems with unknown noise covariances,'' in {\em Advances in Neural
  Information Processing Systems ({NeurIPS})}, vol.~36, pp.~69546--69585,
  Curran Associates, Inc., 2023.

\bibitem{belabbas2025interpretable}
M.~A. Belabbas and A.~Olshevsky, ``Interpretable gradient descent for kalman
  gain,'' {\em arXiv preprint arXiv:2507.14354}, 2025.

\bibitem{Khlebnikov2023comparison}
M.~V. Khlebnikov, ``A comparison of guaranteeing and kalman filters,'' {\em
  Automation and Remote Control}, vol.~84, pp.~389--411, 2023.

\bibitem{talebi_riemannian_2022}
S.~Talebi and M.~Mesbahi, ``Riemannian {Constrained} {Policy} {Optimization}
  via {Geometric} {Stability} {Certificates},'' in {\em 2022 {IEEE} 61st
  {Conference} on {Decision} and {Control} ({CDC})}, pp.~1472--1478, 2022.

\bibitem{kwakernaak1969linear}
H.~Kwakernaak and R.~Sivan, {\em Linear Optimal Control Systems}, vol.~1072.
\newblock Wiley-interscience, 1969.

\bibitem{Tse1970optimal}
E.~Tse and M.~Athans, ``Optimal minimal-order observer-estimators for discrete
  linear time-varying systems,'' {\em IEEE Transactions on Automatic Control},
  vol.~15, no.~4, pp.~416--426, 1970.

\bibitem{lewis1986optimal}
F.~Lewis, {\em Optimal Estimation with an Introduction to Stochastic Control
  Theory}.
\newblock New York, Wiley-Interscience, 1986.

\bibitem{gajic2008lyapunov}
Z.~Gajic and M.~T.~J. Qureshi, {\em Lyapunov Matrix Equation in System
  Stability and Control}.
\newblock Courier Corporation, 2008.

\bibitem{talebi2026riemannian}
S.~Talebi and L.~Bier, ``Riemannian-regularized-policy-optimization,'' Mar.
  2026.
\newblock Available on GitHub at
  \url{https://github.com/shahriarta/Riemannian-regularized-policy-optimization}.

\end{thebibliography}

\end{document}